\newtheorem{thm}{Theorem}[section]
\newtheorem{definition}[thm]{Definition}
\newtheorem{lemma}[thm]{Lemma}
\newtheorem{assumption}[thm]{Assumption}
\newtheorem{rmk}[thm]{Remark}
\newtheorem{emp}[thm]{Example}
\newtheorem{cmt}[thm]{Comment}
\newtheorem{cor}[thm]{Corollary}
\newtheorem{proposition}[thm]{Proposition}
\newcommand{\rr}{\rightarrow}
\newcommand{\comment}[1]{}
\newcommand{\td}[1]{\tilde{#1}}
\newcommand{\dl}[1]{\underline{#1}}
\newcommand{\ul}[1]{\bar{#1}}
\newcommand{\1}{\mathbf{1}}
\newcommand{\Prob}{\mathbb{P}}
\newcommand{\E}{\mathbb{E}}
\newcommand{\G}{\mathbb{G}}
\newcommand{\Gt}{\mathcal{G}}
\newcommand{\F}{\mathbb{F}}
\newcommand{\Ft}{\mathcal{F}}
\title{Intensity Process for a Pure Jump L\'evy Structural Model with Incomplete Information}
\author{Xin Dong,  Harry Zheng\footnote{Department of Mathematics, Imperial
College, London SW7 2AZ, UK. 
 E-mail addresses: x.dong10@imperial.ac.uk, h.zheng@imperial.ac.uk.}\\
 \smallskip Imperial College}
\date{}
\begin{document}

\maketitle

\bigskip\noindent{\bf Abstract} \
In this paper  we discuss a credit risk  model with a pure jump L\'evy process for the asset value and an unobservable random barrier. The default time is the first  time when the asset value falls below the barrier.  Using the  indistinguishability of  the intensity process and the likelihood process, we prove the existence of the intensity process of the default time and  find its explicit representation in terms of the distance between the asset value and its running minimal value. We apply the result to find the instantaneous credit spread  process  and illustrate it with a numerical example.

\bigskip\noindent{\bf Keywords} \
 Pure jump L\'evy process, unobservable random barrier, first passage time, path-dependent intensity
process.

\bigskip\noindent{\bf Mathematics Subject Classification (2010)} \ 
60J75, 60G55, 91G40


\section{Introduction}
The structural model and the intensity model are two frameworks in credit
risk modelling. The structural model is based on the asset-liability structure
of a firm and is economically meaningful. The default time is defined as
the first time the asset value process falls below the default threshold.
One needs to investigate the law of the first passage time or equivalently
the running minimal process. The intensity form model is based on the the fact
that default happens as a surprise to the market and default time is a totally
inaccessible stopping time under a certain filtration. One models  directly the intensity
process that determines the default
indicator process and  the short-term  spread of credit derivatives such
as defaultable bonds and credit default swaps. 

The key difference of the two models is the difference of the information
sets or filtrations, see Jarrow and Protter \cite{PJ04}. If the asset value
process is continuous  and the barrier is deterministic in a structural model
with complete information, then the first passage time is a predictable stopping
time and  does not admit an intensity process under the natural filtration.
In reality, it is difficult to observe the complete information of the asset
value process and the default barrier. 
 There has been active research in the literature on the 
 filtration expansion and its applications in the structural model with incomplete
information, see Guo and Zeng \cite{GZ08} and Janson et al. \cite{JSP11}.

There are two main ways of introducing the incomplete information in the first passage
time model in the literature. One is to assume
the incomplete information about the value process and the constant barrier.
Duffie and Lando \cite{DuffieLando03} discuss for a discretely observable noisy value
process and find the corresponding intensity process . Kusuoka \cite{Kusuoka99} extends \cite{DuffieLando03} to a continuously observable noisy value process. \c{C}etin et al.~\cite{Cetin04}
derive the intensity process with the A\'ezma martingale and the information reduction method.
The other is to assume the observable asset value process but the incomplete
information on the random barrier. Giesecke \cite{Giesecke06}  introduces
 an  unobservable random  barrier and concludes that if the asset value
is a diffusion process then the  default time  is a totally inaccessible
stopping time under the market information filtration  but does not admit
an intensity process. 

In this paper we focus on the first passage time problem of a structural
model for a L\'evy process with finite variation and with incomplete information
of the barrier.
Pure jump processes are important in financial modelling as they can capture
the phenomenon of infinite activities, jumps, skewness and kurtosis. For
example, Madan et al.~\cite{MCC98} use a variance gamma
process for the stock price in option pricing. Madan and Schoutens \cite{MadanSchoutens08} use
a drifted subordinator for the log firm value process in a first passage
time model with complete information. 

In the incomplete information setup the essential mathematical quantity needed
is the conditional default probability. All results in the literature on
the existence of the intensity process are based on the absolute continuity
of the conditional default probability and the close link between the conditional
default density and the intensity. In case of  pure jump processes the conditional
default probability is discontinuous at the time when the asset value process
reaches a new minimal and the conditional default density does not exist. This is
reasonable as one would expect the conditional default probability jumps  when
there is a large  movement of the asset value process.
The main mathematical difficulty, unlike the continuous case in which the
compensator of the conditional default probability is itself, is to find the
compensator due to the unpredictability of the stopping time. 

The objective of the paper is to show that the structural model of a pure jump L\'evy
process  with an unobservable random barrier can be embedded into an equivalent intensity model. 
The key contribution of the paper is to show the existence of the intensity
process and find its explicit form  for a pure jump  L\'evy  process in an
incomplete information framework, which  sheds the new light to the relation
between the intensity process of the default time and the running minimal
process of the asset value. We apply the result to find the instantaneous credit spread  process  
that remains positive and finite, which conforms to the market observations, and that depends on the historical path of the asset value. 

The paper is organized as follows. Section~\ref{sec: model and main} introduces
the model,  states the main result  (Theorem \ref{thm: main}) with several
examples, and discusses the instantaneous credit spread with a numerical example.  
Section~\ref{sec: proof main theorem} proves the main result with details
discussed in four subsections. 
Section~\ref{sec: conclusion} concludes.

\section{The Model and  the Main Result}\label{sec: model and main}
 Let $(\Omega,\mathcal{G},\mathbb{P})$ be a probability space and $V$ be
an observable firm asset value  process given by $V_t = V_0 e^{X_t}$ at time $t$, 
where $X$ is a  L\'evy process with finite variation and $X_0 = 0$. Examples
include drifted subordinators, variance gamma and normal inverse Gaussian
processes. Note that $X$ can be decomposed as (\cite[Exercise 2.8]{Kyprianou06})
\begin{eqnarray}\label{eqn: decompFVLP}
X_t = ct - S_t + S'_t,
\end{eqnarray}
where $c \in \mathbb{R}$ and $S$, $S'$ are independent pure jump subordinators
with L\'evy measures $\pi$, $\pi'$, respectively, see \cite[Lemma 2.14]{Kyprianou06}
for the definition and the properties of a subordinator. Denote by $\mathbb{F}=(\mathcal{F}_t)$
the 
natural filtration generated by $X$. We assume the following assumption be
satisfied in the paper:
\begin{assumption}\label{asm: negLevyMeasure}
L\'evy measure $\pi$ is continuous and satisfies
 $\int_0^\infty x \pi(dx) < \infty$. 
\end{assumption}

Assume that the firm defaults at the first time when the asset value is below
a default threshold, i.e., 
the default time $\tau$ is defined by
\begin{eqnarray*}
\tau:=  \inf\{t >0: V_t\leq  \tilde D\}=\inf\{t >0: X_t \leq  D:=\ln(\tilde
D/V_0)\},
\end{eqnarray*} 
where $\tilde D$ is an unobservable default barrier of the company. 
Assume that 
$\tilde{D}$ is a uniform 
 variable on the interval $[0, V_0]$ and is independent of $V$.  Then  the
barrier $D$ for $X$ is a standard negative exponential  variable, i.e., $-D$
is a standard exponential variable, with the distribution function $ \Prob(D\leq
x) = e^x$ for $x <0$, and is independent of $X$. Note that the default barrier is unobservable but the default time is observable, we therefore define a  progressive filtration expansion $\G = (\Gt_t)$ 
by (\cite[Chapter VI, Section 3]{Protter05})

\begin{eqnarray}\label{eqn:progEnlargement}
\mathcal{G}_t = \{B \in \mathcal{G}: \exists B_t \in \mathcal{F}_t, B \cap
\{\tau > t\} = B_t \cap  \{\tau > t\}\}.
\end{eqnarray}
The default time  $\tau$ is now  a $\G$-stopping time.
All filtrations involved are assumed to satisfy the usual condition.  

Denote  by $N$ the default indicator process, defined by $N_t :=  \1_{\{\tau \leq t\}}$.
The Doob-Meyer decomposition theorem implies that there exists a unique increasing
predictable process $A$ with $A_0=0$, called  the $\G$-compensator of $N$,
such that $N -A$ is a $\G$-martingale. If $A$ is continuous a.s. then $\G$-stopping
time $\tau$ is totally inaccessible.
If $A$ is absolutely continuous a.s. with respect to the Lebesgue measure
and $A$ can be written as $A_t = \int_0^t \lambda_s ds$ a.s., where $\lambda$
is nonnegative and $\G$-progressively measurable, then $\lambda$ is called
the intensity process of $N$, see \cite{Bremaud81} for details on  compensators
and intensity processes.

Denote by $\pi(x+du) := \pi((x+u,x+u+du])$.
If $\pi$ admits a L\'evy density $\nu$, then $\pi(x+du)=\nu(x+u)dx$.
We can now state the main result of the paper.

\begin{thm}\label{thm: main}
Let $X$ be a L\'evy process with finite variation and  Assumption~\ref{asm: negLevyMeasure} be satisfied.  Then the
$\G$-compensator
of the default indicator process $N$ is absolutely continuous a.s. and 
 the intensity process $\lambda$ of $N$ is indistinguishable with the instantaneous
likelihood process $\td{\lambda}_t := \lim_{h\downarrow 0}\frac{1}{h}\Prob(t<
\tau \leq t+h|\mathcal{G}_t)$ on $\{\tau>t\}$.
Moreover,  using the same notation
as in  (\ref{eqn: decompFVLP}), the intensity process $\lambda$ has the following representation
\begin{eqnarray}\label{eqn: lambda_t}
\lambda_t =\mathbf{1}_{\{\tau>t\}}\left(  -c\mathbf{1}_{\{X_t-\dl{X}_t=0\}}\mathbf{1}_{\{c<0\}}+\Pi(X_t-\dl{X}_t)\right),
\end{eqnarray}
where  $\dl{X}_t:=
\inf_{0\leq s \leq t} X_s$ is the running minimal process of $X$ and
\begin{equation} \label{Pi}
\Pi(x) := \int_{0}^\infty (1-e^{-u})\pi(x+du), \quad \forall x\geq 0.
\end{equation}
\end{thm}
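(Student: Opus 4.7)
I would split the argument into three stages: (a) reduce the computation of $\td\lambda_t$ to a deterministic function of the pre-default variable $X_t-\dl X_t$ via the Az\'ema supermartingale; (b) evaluate this function by separating drift and jump contributions; (c) upgrade the pointwise likelihood to the genuine $\G$-intensity.

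For (a), independence of $D$ and $\Ft_\infty$ together with $\Prob(D\leq x)=e^x$ for $x<0$ gives the Az\'ema supermartingale
\begin{equation*}
\Prob(\tau>t\mid\Ft_t)=\Prob(\dl X_t>D\mid\Ft_t)=e^{\dl X_t},
\end{equation*}
and the standard progressive enlargement formula yields, on $\{\tau>t\}$, $\Prob(\tau>t+h\mid\Gt_t)=e^{-\dl X_t}\E[e^{\dl X_{t+h}}\mid\Ft_t]$. Write $Z_t:=X_t-\dl X_t\geq0$ and let $Y$ be an independent copy of $X$. Since $X_{t+\cdot}-X_t$ is independent of $\Ft_t$ with the law of $X$, and $\dl X_{t+h}-\dl X_t=\min(0,Z_t+\dl Y_h)$, the conditional expectation equals $f_h(Z_t)$ on $\{\tau>t\}$ where
\begin{equation*}
f_h(x):=\frac{1}{h}\,\E\!\left[(1-e^{x+\dl Y_h})\mathbf{1}_{\{x+\dl Y_h<0\}}\right],
\end{equation*}
so the task reduces to computing $\lim_{h\downarrow 0}f_h(x)$ for each $x\geq0$.

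For (b), I analyse $f_h(x)$ by the sign of $x$. If $x>0$, then reaching $Y<-x$ on $[0,h]$ requires a downward jump of $S$ of size $>x$, since the drift $ch$ and nondecreasing $S'$ contributions are $O(h)$; the standard L\'evy single-jump expansion combined with continuity of $\pi$ yields
\begin{equation*}
\lim_{h\downarrow 0}f_h(x)=\int_0^\infty(1-e^{-u})\pi(x+du)=\Pi(x),
\end{equation*}
while multi-jump events contribute $o(h)$ inside $f_h$. If $x=0$, the single-jump analysis still produces $\Pi(0)$, but the no-jump event now contributes as well: for $c\geq0$, $\dl Y_h\geq0$ so this term vanishes; for $c<0$, $\dl Y_h/h\to c$ on no-jump paths, giving the extra $-c$. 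Summing the disjoint jump and drift contributions reproduces formula~(\ref{eqn: lambda_t}).

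For (c), Doob--Meyer provides the $\G$-compensator $A$, and the indistinguishability principle mentioned in the introduction reads as follows: if $\td\lambda$ is $\G$-progressive, locally integrable, and the pointwise limit defining $\td\lambda$ can be strengthened to $L^1$-convergence uniformly in $t$ on compacts, then $N_t-\int_0^t\td\lambda_s\,ds$ is a $\G$-martingale, so $A_t=\int_0^t\td\lambda_s\,ds$, proving both absolute continuity of $A$ and $\lambda=\td\lambda$ up to indistinguishability. The main obstacle, which I expect will occupy most of Section~\ref{sec: proof main theorem}, is this uniform $L^1$-upgrade: $Z_t$ may revisit $0$ (where the integrand changes character through the $\mathbf{1}_{\{c<0\}}$ term), so controlling the remainder uniformly requires sharp small-time estimates on the joint distribution of $(\dl Y_h,Y_h)$ that tie together the drift sign, Assumption~\ref{asm: negLevyMeasure}, and the positive jumps of $S'$.
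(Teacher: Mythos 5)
Your stage (a) reproduces the paper's reduction exactly (it is Lemma~\ref{k^n}): your $f_h(x)$ is the same quantity as $\frac1h\mathbb{E}\bigl[1-e^{-(y-\underline{X}_h)^+}\bigr]$ with $y=-x$, obtained from $Z_t=e^{\underline{X}_t}$ and the optional-projection formula. The genuine gap is in stage (b). You evaluate $\lim_{h}f_h(x)$ by appealing to ``the standard L\'evy single-jump expansion,'' but the standard small-time result of this type (Sato, Corollary 8.9: $\frac1h\mathbb{E}[f(X_h)]\to\int f\,d\pi_X$ for bounded continuous $f$ vanishing near $0$) applies to the marginal $X_h$, not to the running infimum $\underline{X}_h$, which is not a L\'evy process unless $X$ is monotone --- the paper makes exactly this point in a remark. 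Moreover, since $S$ and $S'$ may have infinite activity, there is no ``one jump / no jump'' partition of paths on $[0,h]$: your claims that multi-jump events contribute $o(h)$, and that at $x=0$, $c<0$ the extra $-c$ comes from ``no-jump paths with $\underline{Y}_h/h\to c$,'' are precisely the statements requiring proof. The paper's work is concentrated there: for the spectrally negative part $ct-S_t$ with $c>0$ it writes $\underline{X}$ as a renewal--reward process and uses the Ballot theorem for the joint law of the first passage time and overshoot, showing the one-jump term converges to $\Pi(-y)$ and bounding the remainder by its value at $y=0$, which is shown to vanish; for $c\le 0$ it uses that $ct-S_t$ is nonincreasing, so the infimum equals the process itself and the Laplace exponent gives $-c+\Pi(0)$ exactly (and Sato's corollary handles $y<0$); finally the positive-jump subordinator $S'$ is incorporated by squeezing between $\underline{Z}_h(c)$ and $\underline{Z}_h(c+\epsilon)$ on $\{S'_h\le\epsilon h\}$ and letting $\epsilon\downarrow 0$. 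None of these ingredients (or substitutes for them) appear in your plan, so stage (b) is an assertion rather than a proof.

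A secondary point: you place the main obstacle in stage (c), but that step is the easy one. Once the pointwise limit exists, Aven's condition only needs a uniform bound on $\lambda^h_t$, which follows in one line from $(y-\underline{X}_h)^+\le -\underline{X}_h$ and $\underline{X}_h\ge (c\wedge 0)h-S_h$, giving $\lambda^h_t\le \frac1h\bigl(1-e^{(c\wedge 0)h-\Pi(0)h}\bigr)\le -(c\wedge 0)+\Pi(0)$; no sharp small-time estimates on $(\underline{Y}_h,Y_h)$ are required. Your difficulty assessment is therefore inverted: the compensator identification is routine, while the small-time limit of the infimum functional --- which you label ``standard'' --- is the heart of the theorem.
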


\comment{
\begin{thm}
If $X$ is a pure jump L\'evy process with infinite variation, then the compensator
of $\tau$ under $\G$ is continuous which indicates that $\tau$ is totally
inaccessible.
\end{thm}
}

Theorem~\ref{thm: main} shows that the intensity process $\lambda$ is an
endogenous process that depends on the path of the asset value
 process $X$. Moreover, at each time $t$, $\lambda_t$ is a decreasing function of $X_t-\dl{X}_t$,
 a financially desirable property as it means that the default intensity
increases when the asset value process $X$ approaches its historical minimal
level.

  We next give several examples to illustrate Theorem \ref{thm: main}.

\begin{emp} (Drifted Compound Poisson Process)
{\rm Let $X$ be given by
\begin{eqnarray*} 
X_t = ct  - \sum_{i=1}^{M_t}Y_i + \sum_{i=1}^{M'_t}Y'_i,
\end{eqnarray*}
where $c\in \mathbb{R}$,  $Y_i$ and $Y'_i$ are exponential variables with
parameters $\beta$ and $\beta'$, respectively, $M$ and $M'$ are Poisson
processes with intensities $\rho$ and $\rho'$, respectively, and $\{Y_i\}$,
$\{Y'_i\}$, $M$, $M'$ are independent of each other. The L\'evy density
 of $X_t$ on $\mathbb{R}_{-}$ is given by $\nu_{-}(x)= \rho \beta e^{-\beta x}$. The
intensity process $\lambda$ of  the default indicator process $N$ is then
given by Theorem~\ref{thm: main} as   
\begin{eqnarray*}
\lambda_t &=& \mathbf{1}_{\{\tau>t\}}\left(
 -c\mathbf{1}_{\{X_t-\dl{X}_t=0\}}\mathbf{1}_{\{c<0\}}+  \frac{\rho}{1+\beta}
e^{-\beta(X_t - \dl{X}_t)} \right).
\end{eqnarray*}
}
\end{emp}

\begin{emp} (Drifted Gamma Process)\label{emp: DG}
{\rm Let $X$ be given by
\begin{eqnarray*}
X_t =ct -G_t,
\end{eqnarray*}
where $c>0$, $G_t$ is a gamma process $\Gamma(t,\mu,\nu)$ with the mean rate $\mu$, the variance rate $\nu$, and  the L\'evy density
$\nu(x)=  \frac{\mu^2}{\nu} e^{-\frac{\mu}{\nu}x}x^{-1}$.  The intensity process  of  $N$ is given by
\begin{eqnarray*}
\lambda_t = \mathbf{1}_{\{\tau>t\}}\left( \int_{0}^\infty (1-e^{-u}) \frac{\mu^2}{\nu}e^{-\frac{\mu}{\nu}(u+X_t
-\dl{X}_t)}(u+X_t-\dl{X}_t)^{-1}du \right).
\end{eqnarray*}
Note that $c>0$ in this case, hence the first term in (\ref {eqn: lambda_t})  disappears.}
\end{emp}

\begin{emp} (Variance Gamma Process \cite{MCC98})\label{emp: VG} {\rm
Let $X$ be a variance gamma process $VG(c,\nu,\sigma,\theta)$ that is generated by a drifted Brownian
motion $\theta t + \sigma W_t$, time-changed by a gamma process $\Gamma(t;1,\nu)$, and an additional drift term $ct$, then 
\begin{equation}\label{eqn: VG_decomposition}
X_t = ct+ \Gamma(t;\mu_+,\nu_+)-\Gamma(t;\mu_-,\nu_-),
\end{equation}
where $\mu_{\pm} = \frac{1}{2}\sqrt{\theta^2+\frac{2\sigma^2}{\nu}}\pm
\frac{\theta}{2}$, and $\nu_{\pm}= \mu_{\pm}^2 \nu$. 
The intensity process of $N$ is given by
\begin{eqnarray}\label{eqn: lambda_VG}
\lambda_t =\mathbf{1}_{\{\tau>t\}}\left(
 -c\mathbf{1}_{\{X_t-\dl{X}_t=0\}}\mathbf{1}_{\{c<0\}}+\int_{0}^\infty (1-e^{-u})
\frac{(\mu_-)^2}{\nu_-}e^{-\frac{\mu_-}{\nu_-}(u+X_t -\dl{X}_t)}(u+X_t -\dl{X}_t)^{-1}du
\right).
\end{eqnarray}
}
\end{emp}

\comment{
\begin{figure}
\begin{center}
\includegraphics[scale=0.7]{./Fig/VG_3_t.png}
\caption{Sample path of variance gamma process.}
\end{center}
\end{figure}
}

We next provide an application
of Theorem \ref{thm: main} in credit risk modelling. The credit spread $S(t,h)$ of a defaultable name over the time interval $[t,
t+h]$ is defined by
$$ S(t,h) := -\frac{1}{h}\ln\left(1-\Prob(t<\tau \leq t+h|\mathcal{G}_t)\right),
$$
where $\Prob\left(t<\tau \leq t+h|\mathcal{G}_t\right)$ is the conditional
default
probability  given $\tau>t$.
Using the Taylor expansion, we can find the instantaneous credit spread  $s(t)$
as
\begin{eqnarray*}
s(t):= \lim_{h \downarrow 0}S(t,h) =\lim_{h \downarrow 0}\frac{1}{h}\Prob\left(t<\tau
\leq t+h|\mathcal{G}_t)\right) =\td{\lambda}_t.
\end{eqnarray*}
Theorem~\ref{thm: main} says that
 $s(t)$ is positive and finite almost surely and is given by
$$s(t) =-c\mathbf{1}_{\{X_t-\dl{X}_t=0\}}\mathbf{1}_{\{c<0\}}+\Pi(X_t-\dl{X}_t),
$$ 
which conforms to the market observation that the instantaneous credit spread
remains positive and finite even though the bond is near its maturity and that the bond price often drops around the time of default due to uncertainties about the closeness of the current asset value to the default threshold. For more details of the instantaneous credit spread and its term structure, see \cite{DuffieLando03,Giesecke06}.

We next give a numerical example to illustrate the results.  We take the variance gamma process $VG(c,\nu,\sigma,\theta)$  in Example~\ref{emp: VG}.
The data used are $(c,\nu,\sigma,\theta) = (-0.02,0.1,0.15,0.01)$. 
Figure~\ref{fig: X_intensity} displays for $t\in [0,5]$ a sample path of the asset return process $X$, the running minimal process $\dl{X}$ and the resulting intensity process
$\lambda$. Figure~\ref{fig: X_intensity} also shows  the distance $X_t - \dl{X}_t$ and its contribution $\Pi(X_t - \dl{X}_t)$ to  the intensity. We can observe the reciprocal relation of the intensity  $\lambda_t$ and the distance  $X_t - \dl{X}_t$, which is consistent with the observation in the credit market. Note that $\Pi(\cdot)$ on $\mathbb{R}_+$ is bounded above by $\Pi(0)$ that is fully determined by the L\'evy measure of $X$.  The upper bound $\Pi(0)$ is reached when $X_t - \dl{X}_t = 0$, i.e. the process $X$ reaches a new minimal level, and the  intensity  $\lambda_t$ at that time is above $\Pi(0)$ by the amount $|c|$ as the drift parameter $c<0$.  
Figure~\ref{fig: creditSpread_termStructure}, using the same sample path of Figure~\ref{fig: X_intensity},  shows the term structure of the credit spread $h\mapsto S(t,h)$ at time $t=0.5$, starting from $S(t,0)=\lambda_t$.

\begin{figure}[h!]
\centering
\includegraphics[trim = 10mm 45mm 5mm 20mm, clip, width=0.9\textwidth]{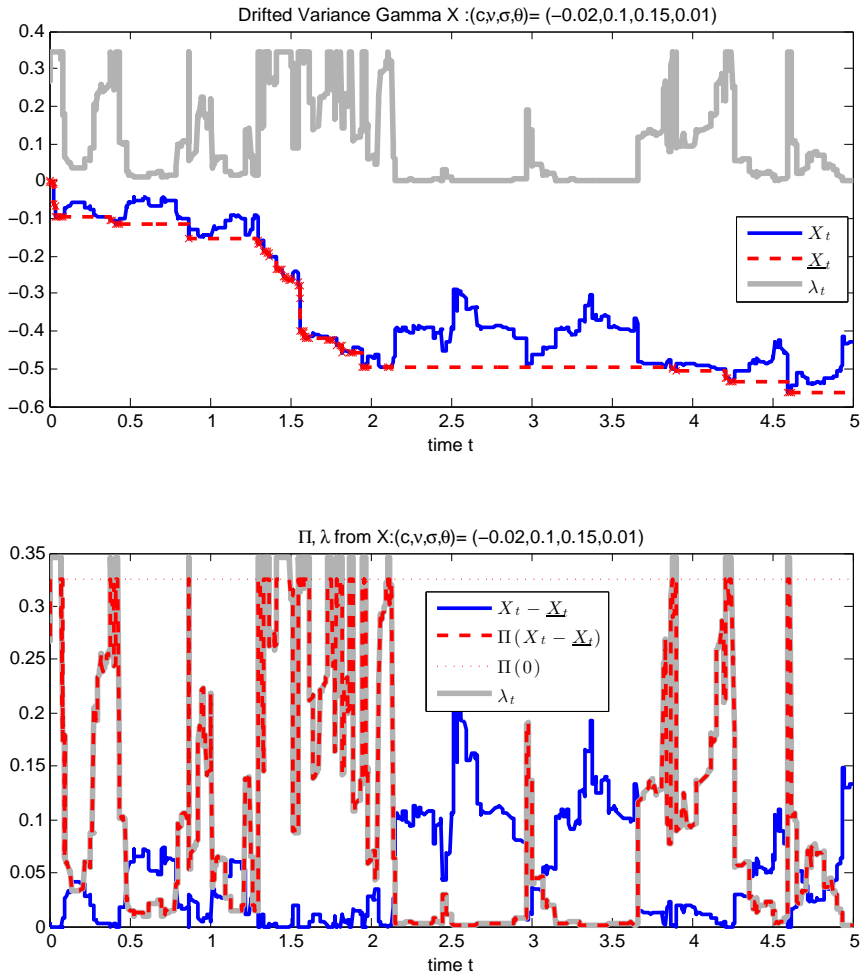}
\caption{The asset return process $X$ as in Example~\ref{emp: VG},  the distance process $X - \dl{X}$ and  the intensity process $\lambda$. The data used are $(c,\nu,\sigma,\theta)=(-0.02,0.1,0.15,0.01)$. }
\label{fig: X_intensity}
\end{figure}

\begin{figure}[h!]
\centering
\includegraphics[trim = 10mm 70mm 5mm 60mm, clip, width=0.6\textwidth]{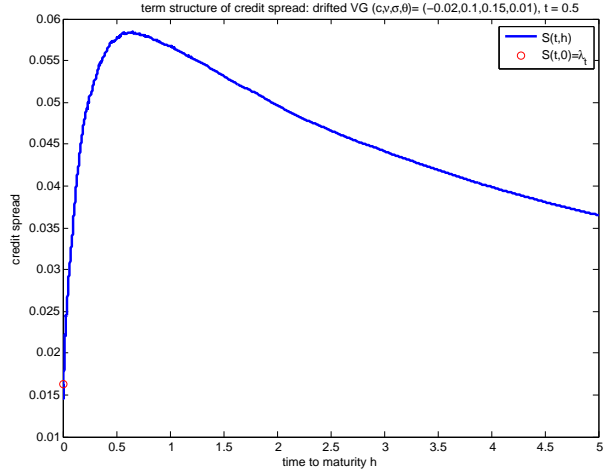}
\caption{The term structure of credit spread $S(t,h)$: the asset return process  $X$ as in Example~\ref{emp: VG} with the data 
$(c,\nu,\sigma,\theta)=(-0.02,0.1,0.15,0.01)$, at $t=0.5$,
 $X_t - \dl{X}_t =
0.0585$.}\label{fig: creditSpread_termStructure}
\end{figure}

\comment{
\begin{proof}
By Theorem 14 in Protter\cite{Protter05}, see also Gieseck\cite{Giesecke06}
for the optional projection, if a random variable $X$ is a bounded nonnegative
and integrable, then for each $t\geq 0$,
\begin{eqnarray}
\E[X|\Gt_t] = 1_{\{\tau>t\}}\frac{1}{Z_t} \E\left[X1_{\{\tau>t\}}|\Ft_t\right]+X1_{\{\tau
\leq t\}} \quad a.s.
\end{eqnarray}

Take $X = 1_{\{\tau \leq t+\epsilon\}}$, we have for each $t\geq 0$:
\begin{eqnarray*}
q(t,\epsilon) = \Prob\left(\tau \leq t+\epsilon|\Gt_t \right) = \frac{1}{Z_t}\Prob(t<\tau\leq
t+\epsilon|\Ft_t) + 1_{\{\tau \leq t\}}. 
\end{eqnarray*}

Note that 
\begin{eqnarray*}
\Prob(\tau>t+\epsilon|\Ft_t) = \E\left[\Prob(\tau>t+\epsilon | \Ft_{t+\epsilon})|\Ft_t\right]
= \E[Z_{t+\epsilon}|\Ft_t].
\end{eqnarray*}
hence for each $t< \tau$,
\begin{eqnarray*}
q(t,\epsilon) =\frac{1}{Z_t}\Prob(t<\tau\leq t+\epsilon|\Ft_t) = \frac{1}{Z_t}\E\left[
Z_t - Z_{t+\epsilon} \Big| \Ft_t \right].
\end{eqnarray*}

As $K$ is the $\F$-compensator of $1-Z$, we have
\begin{eqnarray*}
\frac{1}{\epsilon} q(t,\epsilon) = \frac{1}{Z_t}  \frac{1}{\epsilon}\E\left[
Z_t - Z_{t+\epsilon} \Big| \Ft_t \right]=\frac{1}{Z_t}  \frac{1}{\epsilon}\E\left[K_{t+\epsilon}-K_t
\Big| \Ft_t \right].
\end{eqnarray*}

If $\td{k}_t = \lim_{\epsilon\downarrow 0} \frac{1}{\epsilon}\E\left[ K_{t+\epsilon}
- K_{t} \Big| \Ft_t \right]$ defined in Eqn(\ref{eqn:tdkt}) exists, we have
\begin{eqnarray*}
s(t) = \lim_{\epsilon \downarrow 0} \frac{1}{\epsilon} q(t,\epsilon) = \frac{\td{k}_t}{Z_t}
\geq 0.
\end{eqnarray*}

If $K_t = \int_0^t \td{k}_s ds$, then for any $t<\tau$,
\begin{eqnarray}\label{eqn:lambdaSpread}
\lambda_t = \frac{dK_t}{Z_{t}} = \frac{\td{k}_t}{Z_t}.
\end{eqnarray}

From the above equation and Eqn(\ref{eqn:Spreadk}), we obtain Eqn(\ref{eqn:Spreadlambda}).

\end{proof}

\begin{rmk}
\rm{
As we prove instantaneous likelihood process and intensity process are indistinguishable
in the main result, we also obtain the equivalence between intensity process
and short spread. 
}
\end{rmk}
}

\section{Proof of Theorem~\ref{thm: main}}\label{sec: proof main theorem}
Theorem~\ref{thm: main} is proved in four steps, detailed in the following
subsections.
Subsection 3.1 shows the relation of the likelihood processes under different
filtration (Lemma \ref{k^n}), Subsections 3.2 and 3.3  establish the existence
of the limit process for a spectrally negative L\'evy process with finite
variation (Proposition \ref{prop:td_kn_Convg})  and for a general L\'evy
process with finite variation
(Proposition \ref{prop:Squeezing_positiveDrift}), and Subsection 3.4 confirms
the indistinguishability of the instantaneous likelihood process and the intensity process
using Aven's condition.

\subsection{Compensators and Likelihood Processes under Different Filtrations}
The conditional survival probability at each time $t$ is given by
\begin{eqnarray*}\label{eqn: defLt}
Z_t  := \mathbb{P}(\tau >t | \mathcal{F}_t) = \mathbb{P}(\dl{X}_t > D | \mathcal{F}_t)
= e^{\dl{X}_t}.
\end{eqnarray*} 
It is known (\cite[Chapter VI, Theorem 11]{Protter05}) that there exists a unique, increasing, $\G$-predictable process $A$, the
$\G$-compensator of $N$, such that the difference of $A$ and $N$ is a uniformly
integrable $\G$-martingale.
Our objective is to find $A$. 

Let $Z_{t-} = \lim_{s\uparrow t} Z_s$ and $Z_{0-}=1$. Define a nondecreasing
$\mathbb{F}$-predictable process $A$ by
\begin{eqnarray*}\label{eqn:trend}
A_t = \int_0^t \frac{dK_s}{Z_{s-}},
\end{eqnarray*}
where $K_t$ is the unique, increasing, $\F$-predictable compensator of $\F$-submartingale
$1-Z_t = \Prob(\tau \leq t|\Ft_t)$. 
\comment{
From this definition, we observe that the regularity of $A$ is consistent
of $K$. $A$ is continuous/absolutely continuous if and only if $K$ is continuous/absolutely
continuous. In the case where $Z$ is continuous, then $K=L$ as $Z$ is $\F$-predictable
thus $A_t = -\ln Z_t$ continuous.
}

\begin{thm}[\cite{JY78}] \label{thm3.1}
The process $N-A^\tau$ is a $\G$-martingale, where $A^\tau = (A_{t\wedge
\tau})_{t\geq 0}$. \label{prop:JY}
\end{thm}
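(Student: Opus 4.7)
The plan is to verify the $\mathcal{G}$-martingale property of $M := N - A^\tau$ directly, using the structural description (\ref{eqn:progEnlargement}) of the progressive enlargement. First I would reduce to checking $\E[\mathbf{1}_B(M_t - M_s)] = 0$ for sets $B$ of the form $B_s \cap \{\tau > s\}$ with $B_s \in \mathcal{F}_s$, since on $\{\tau \leq s\}$ both $N$ and $A^\tau$ have already frozen and the increment is trivially zero.

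Next, I would compute both sides of this identity in terms of the $\mathbb{F}$-compensator $K$ of $1-Z$. For the $N$-increment, conditioning on $\mathcal{F}_s$ yields $\Prob(s < \tau \leq t \mid \mathcal{F}_s) = Z_s - \E[Z_t \mid \mathcal{F}_s]$, which by the Doob--Meyer decomposition equals $\E[K_t - K_s \mid \mathcal{F}_s]$, so the left side becomes $\E[\mathbf{1}_{B_s}(K_t - K_s)]$. For the $A^\tau$-increment, after substituting $dA_u = dK_u/Z_{u-}$ and using that $\mathbf{1}_{\{\tau > s\}}\mathbf{1}_{\{u \leq \tau\}} = \mathbf{1}_{\{u \leq \tau\}}$ for $u > s$, the expression becomes $\E[\mathbf{1}_{B_s} \int_s^t \mathbf{1}_{\{u \leq \tau\}} \, dK_u / Z_{u-}]$. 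The crucial step is a Fubini/predictable-projection identity: since $K$ is $\mathbb{F}$-predictable and ${}^{p}(\mathbf{1}_{\{\tau \geq u\}}) = Z_{u-}$ (a classical property of the Az\'ema supermartingale), the factor $\mathbf{1}_{\{u \leq \tau\}}/Z_{u-}$ integrates against $dK$ as if it were the constant $1$, giving $\E[\mathbf{1}_{B_s}(K_t - K_s)]$ and matching the left side.

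The main obstacle is this predictable-projection step, because it rests on the identification $Z_{u-} = {}^{p}(\mathbf{1}_{\{\tau \geq u\}})$, which in turn requires that $\tau$ avoid $\mathbb{F}$-predictable stopping times. In our setting this holds because the barrier $D$ is an independent continuous random variable, so the first-passage time $\tau$ hits any fixed $\mathbb{F}$-predictable time with probability zero. Integrability of the resulting martingale is automatic: plugging $s = 0$, $B_s = \Omega$, and $t \uparrow \infty$ into the identity already established yields $\E[A^\tau_\infty] = \Prob(\tau < \infty) \leq 1$, so $N - A^\tau$ is uniformly integrable.
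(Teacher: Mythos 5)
Your proposal is correct in outline, but it necessarily takes a different route from the paper, because the paper does not prove this statement at all: it is quoted as the classical Jeulin--Yor theorem with a citation to \cite{JY78}, and the text immediately moves on to using it. What you have written is essentially a self-contained proof of that cited result, and it is the standard one: reduce to test sets $B_s\cap\{\tau>s\}$ with $B_s\in\Ft_s$ (the increment of $N-A^\tau$ vanishes identically on $\{\tau\le s\}$, so this reduction is legitimate), identify $\E[\mathbf{1}_{B_s}\mathbf{1}_{\{s<\tau\le t\}}]=\E[\mathbf{1}_{B_s}(K_t-K_s)]$ via the tower property and Doob--Meyer, rewrite the compensator increment as $\int_{(s,t]}\mathbf{1}_{\{u\le\tau\}}Z_{u-}^{-1}\,dK_u$, and kill the indicator by predictable projection; integrability follows from $\E[A^\tau_t]=\E[K_t]=\Prob(\tau\le t)\le 1$. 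Two refinements are worth noting. First, the identity you flag as the ``main obstacle,'' ${}^{p}\bigl(\mathbf{1}_{\{\tau\ge u\}}\bigr)=Z_{u-}$, holds for an arbitrary random time and needs no avoidance of $\F$-predictable stopping times: writing $\mathbf{1}_{[0,\tau]}=\mathbf{1}_{[0,\tau)}+\mathbf{1}_{[\tau]}$ and using that $K$ is also the $\F$-dual predictable projection of $N$, one gets ${}^{p}(\mathbf{1}_{[0,\tau)})=Z_{-}-\Delta K$ and ${}^{p}(\mathbf{1}_{[\tau]})=\Delta K$, which sum to $Z_{-}$. Avoidance of predictable times is what would give continuity of $K$, not this identity; and your justification of avoidance from the independent continuous barrier alone is incomplete anyway (one would also invoke quasi-left-continuity of the L\'evy process to rule out $\tau$ sitting on a predictable jump time), so it is fortunate that the step is not needed. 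Second, the division by $Z_{u-}$ and the finiteness of $A$ deserve a word: here $Z_t=e^{\dl{X}_t}>0$, so there is no issue, whereas for a general random time the Jeulin--Yor construction must be stopped before $Z$ reaches zero. With those points tidied up, your argument is a valid proof from first principles of the result the paper simply imports, which is a reasonable trade: your route makes the theorem self-contained via the projection calculus, while the paper's citation keeps Section 3.1 short and moves the weight of the work to computing $K$ and its density.
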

Theorem \ref{thm3.1} shows that one can transform the problem of finding
the $\G$-compensator of $N$ into the problem of finding the $\F$-compensator
of $Z$. 
If $Z_t$ is a continuous process, then $K_t = -Z_t$ and $A_t = -\ln(Z_t)$.
If $Z$ is discontinuous, then finding $K$ is nontrivial, see \cite{GZ08}.

The next result characterizes the likelihood processes under
different filtrations.

\begin{lemma} \label{k^n}
For any  L\'evy process $X$, $h>0$, denote 
\begin{eqnarray*}
k^h_t &:=& \frac{1}{h}\E\left[ K_{t+h}-K_t  |\Ft_t \right]
\quad\mbox{and}\quad
\lambda^h_t :=\frac{1}{h}\E\left[N_{t+h}-N_t |\mathcal{G}_t \right].
\end{eqnarray*}
Then, 
\begin{eqnarray}\label{k^h}
k^h_t = e^{\dl{X}_t} \frac{1}{h}  \mathbb{E}\left[1-e^{-\left(y -\dl{X}_{h}
\right)^+} \right]\Big|_{y=\dl{X}_{t}-X_t}
\quad\mbox{and}\quad
\lambda^h_t = 1_{\{\tau>t\}}e^{-\dl{X}_t} k^h_t.
\end{eqnarray}
\end{lemma}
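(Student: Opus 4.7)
The plan is to derive the two formulas in (\ref{k^h}) by reducing everything to an expectation involving $\dl{X}_{t+h}$, then exploiting the stationary independent increments of the L\'evy process $X$ to peel the $\Ft_t$-measurable quantities out of the conditional expectation. The final step for $\lambda^h_t$ is just an application of the standard progressively enlarged filtration formula.

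First I would use the defining property of $K$. Since $K$ is the $\F$-compensator of the submartingale $1-Z$, we have $\E[K_{t+h}-K_t\,|\,\Ft_t]=\E[(1-Z_{t+h})-(1-Z_t)\,|\,\Ft_t]=\E[Z_t-Z_{t+h}\,|\,\Ft_t]$, so
\begin{eqnarray*}
k^h_t=\frac{1}{h}\E\bigl[e^{\dl{X}_t}-e^{\dl{X}_{t+h}}\,\bigl|\,\Ft_t\bigr].
\end{eqnarray*}
The whole problem thus reduces to evaluating $\E[e^{\dl{X}_{t+h}}\,|\,\Ft_t]$.

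Next, decompose the running infimum over $[0,t+h]$ as $\dl{X}_{t+h}=\dl{X}_t\wedge\bigl(X_t+\td m_h\bigr)$, where $\td m_h:=\inf_{0\leq u\leq h}(X_{t+u}-X_t)$. Stationarity and independence of increments imply that $\td m_h$ is independent of $\Ft_t$ and has the same law as $\dl{X}_h$. The key algebraic identity $a\wedge b=a-(a-b)^+$ gives
\begin{eqnarray*}
\dl{X}_{t+h}=\dl{X}_t-\bigl(\dl{X}_t-X_t-\td m_h\bigr)^+,
\end{eqnarray*}
so $e^{\dl{X}_{t+h}}=e^{\dl{X}_t}\exp\!\bigl(-(y-\td m_h)^+\bigr)\big|_{y=\dl{X}_t-X_t}$. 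Since $e^{\dl{X}_t}$ and $y=\dl{X}_t-X_t$ are $\Ft_t$-measurable while $\td m_h\stackrel{d}{=}\dl{X}_h$ is independent of $\Ft_t$, the conditional expectation factors, yielding $\E[e^{\dl{X}_{t+h}}\,|\,\Ft_t]=e^{\dl{X}_t}\E[\exp(-(y-\dl{X}_h)^+)]\big|_{y=\dl{X}_t-X_t}$. Subtracting from $e^{\dl{X}_t}$ and dividing by $h$ produces the first identity in (\ref{k^h}).

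For the second identity, $N_{t+h}-N_t=\1_{\{t<\tau\leq t+h\}}$ vanishes on $\{\tau\leq t\}$, so the standard key projection formula for the progressive enlargement \eqref{eqn:progEnlargement} (see \cite[Chapter VI]{Protter05}) gives
\begin{eqnarray*}
\E[N_{t+h}-N_t\,|\,\Gt_t]=\1_{\{\tau>t\}}\frac{1}{Z_t}\E\bigl[\1_{\{t<\tau\leq t+h\}}\,\bigl|\,\Ft_t\bigr].
\end{eqnarray*}
Using the tower property with $\Ft_{t+h}$ and the fact that $\Prob(\tau\leq t+h\,|\,\Ft_{t+h})=1-Z_{t+h}$, the inner expectation equals $\E[Z_t-Z_{t+h}\,|\,\Ft_t]=hk^h_t$. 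Combined with $Z_t=e^{\dl{X}_t}$, this is exactly the second identity in (\ref{k^h}).

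I do not foresee any genuine obstacle; everything is algebraic manipulation plus the Lévy independence. The only point that needs a little care is the $\min$-to-$(\cdot)^+$ rewriting that isolates the $\Ft_t$-measurable variables $(\dl{X}_t,X_t)$ from the independent copy $\td m_h\stackrel{d}{=}\dl{X}_h$; once that is in place, both formulas drop out directly.
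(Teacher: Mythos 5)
Your proposal is correct and follows essentially the same route as the paper: the identity $a\wedge b=a-(a-b)^+$ applied to $\dl{X}_{t+h}=\dl{X}_t\wedge(X_t+\inf_{0\leq u\leq h}(X_{t+u}-X_t))$, the stationary independent increments of $X$ to factor the conditional expectation, the Doob--Meyer property of $K$ to pass from $K$-increments to $Z$-increments, and the optional projection formula for the progressively enlarged filtration for $\lambda^h_t$. No gaps.
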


\begin{proof}
Since
\begin{eqnarray*}
\dl{X}_{t+h}-\dl{X}_{t} &=& \inf_{u\in[t,t+h]}X_u \wedge \dl{X}_t -\dl{X}_{t}\\
&=& \inf_{u\in[0,h]}(X_{t+u}-X_t) \wedge (\dl{X}_t-X_t) -(\dl{X}_{t}-X_t)\\
&=& -\left( (\dl{X}_{t}-X_t) -\inf_{u\in[0,h]}(X_{t+u}-X_t) \right)^+,
\end{eqnarray*}
we have
\begin{eqnarray*}
\mathbb{E}\left[e^{\dl{X}_{t+h}}-e^{\dl{X}_{t}}|\mathcal{F}_t \right] &=&
e^{\dl{X}_t}\mathbb{E}\left[e^{\dl{X}_{t+h}-\dl{X}_{t}}-1|\mathcal{F}_t \right]\\
&=& e^{\dl{X}_t}\mathbb{E}\left[e^{-\left( (\dl{X}_{t}-X_t) -\inf_{u\in[0,h]}(X_{t+u}-X_t)
\right)^+}-1|\mathcal{F}_t \right]\\
&=& e^{\dl{X}_t}\mathbb{E}\left[e^{-\left(y -\dl{X}_h \right)^+}-1 \right]\Big|_{y=\dl{X}_{t}-X_t},
\end{eqnarray*}
where the last equality comes from the independent and stationary increment
property of L\'evy process $X$ and adaptedness of $X$ and $\dl{X }$ in $\F$.
Since  $K$ is the $\F$-compensator of $1- Z$, the Doob-Meyer decomposition
 says that 
\begin{eqnarray*}
\mathbb{E}\left[K_{t+h}-K_t|\mathcal{F}_t \right] = -\mathbb{E}\left[Z_{t+h}-Z_t|\mathcal{F}_t
\right] = -\mathbb{E}\left[e^{\dl{X}_{t+h}}-e^{\dl{X}_{t}}|\mathcal{F}_t
\right]. 
\end{eqnarray*}
Combining the above  gives  $k^h_t$ in (\ref{k^h}).

Next, by the optional projection theorem (Theorem 14, Chap.VI, \cite{Protter05}
and  \cite{Giesecke06}), we know that if a random variable $\xi$ is nonnegative
and integrable, then for each $t\geq 0$, the right continuous version of
$\E[\xi|\Gt_t]$ is given by
\begin{eqnarray}\label{eqn: optional_projection}
\E[\xi|\Gt_t] = 1_{\{\tau>t\}}\frac{1}{Z_t} \E\left[\xi 1_{\{\tau>t\}}|\Ft_t\right]+\xi
1_{\{\tau \leq t\}} \quad a.s.
\end{eqnarray}

Therefore, using the tower property of the expectation and the fact that
 $K$ is a $\F$-compensator of $1-Z$, we have
\begin{eqnarray*}
\lambda^h_t
&=&  \frac{1}{h}\E\left[N_{t+h}-N_t |\mathcal{G}_t \right]\\
&=&  1_{\{\tau>t\}}\frac{1}{h}{1\over Z_t}\E\left[1_{\{t<\tau\leq t+h\}}|\mathcal{F}_t
\right]\\
&=&    1_{\{\tau>t\}}\frac{1}{h}\frac{1}{Z_{t}} \E\left[ Z_t-Z_{t+h}  |\Ft_t
\right] \\
&=&1_{\{\tau>t\}}\frac{1}{Z_{t}}  \frac{1}{h}\E\left[ K_{t+h}-K_t  |\Ft_t
\right]. \\
&=& 1_{\{\tau>t\}}e^{-\dl{X}_t} k^h_t.
\end{eqnarray*}
This gives  $\lambda^h_t$ in (\ref{k^h}).\end{proof}

The next result follows immediately from Lemma \ref{k^n}.
\begin{cor}\label{cor: instantaneousProcess} Assume that 
$\td{k}_t:=\lim_{h\downarrow 0} k^h_t $ exists  for all $t$ a.s., then the
instantaneous likelihood process on $\{\tau>t\}$ is given by

 $$\td{\lambda}_t := \lim_{h\downarrow 0}\frac{1}{h}\Prob(t< \tau \leq t+h|\mathcal{G}_t)=
e^{-\dl{X}_t}\td{k}_t.$$
\end{cor}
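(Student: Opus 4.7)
The plan is to derive the corollary directly from Lemma~\ref{k^n}, exploiting the identity already established there for $\lambda^h_t$. The corollary is a statement about pointwise limits on the set $\{\tau > t\}$, so no new machinery is required beyond the hypothesis $\td{k}_t = \lim_{h \downarrow 0} k^h_t$; the task is essentially to rewrite the definition of $\td{\lambda}_t$ in terms of $\lambda^h_t$ and then pass to the limit.

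First, I would observe that $N_{t+h} - N_t = \mathbf{1}_{\{t < \tau \leq t+h\}}$, so by definition of $\lambda^h_t$,
$$\frac{1}{h}\Prob(t < \tau \leq t+h \mid \mathcal{G}_t) = \frac{1}{h}\E[N_{t+h} - N_t \mid \mathcal{G}_t] = \lambda^h_t.$$
Hence $\td{\lambda}_t = \lim_{h \downarrow 0} \lambda^h_t$ whenever the right-hand limit exists. Next I would invoke the second identity of Lemma~\ref{k^n}, which gives
$$\lambda^h_t = \mathbf{1}_{\{\tau > t\}} e^{-\dl{X}_t} k^h_t.$$
On the event $\{\tau > t\}$ the indicator equals one, so $\lambda^h_t = e^{-\dl{X}_t} k^h_t$ almost surely on this set.

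Finally, since $e^{-\dl{X}_t}$ is a fixed, $\mathcal{F}_t$-measurable, finite random variable (and does not depend on $h$), I would pull it out of the limit and apply the standing hypothesis $\td{k}_t = \lim_{h \downarrow 0} k^h_t$ (which exists a.s.) to conclude
$$\td{\lambda}_t = \lim_{h \downarrow 0} \lambda^h_t = e^{-\dl{X}_t} \lim_{h \downarrow 0} k^h_t = e^{-\dl{X}_t} \td{k}_t$$
on $\{\tau > t\}$, as claimed.

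The argument is essentially formal and I do not expect a genuine obstacle: the only thing worth double-checking is that $\td{\lambda}_t$ and $\lambda^h_t$ correspond to the same conditional quantity (which they do, by the definition of $N$), and that the limit passes through the multiplication by the bounded, $h$-independent factor $e^{-\dl{X}_t}$ (which is immediate pointwise). The entire content of the corollary is therefore a bookkeeping consequence of Lemma~\ref{k^n}, with the existence of the limit $\td{k}_t$ supplied by assumption; the real work of proving that this limit actually exists is deferred to the later subsections (Propositions~\ref{prop:td_kn_Convg} and \ref{prop:Squeezing_positiveDrift}).
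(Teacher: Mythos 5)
Your proposal is correct and is exactly the route the paper takes: the paper states the corollary "follows immediately from Lemma \ref{k^n}", and your argument simply spells out that immediate consequence — identifying $\frac{1}{h}\Prob(t<\tau\leq t+h\mid\mathcal{G}_t)$ with $\lambda^h_t$, restricting to $\{\tau>t\}$ where the indicator is one, and passing to the limit through the $h$-independent factor $e^{-\dl{X}_t}$ using the assumed existence of $\td{k}_t$. No gap; nothing further needed.
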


\begin{rmk}
Note that $\xi 1_{\{\tau \leq t\}}$ in (\ref{eqn: optional_projection}) is
$\Gt_t$ measurable by definition. Indeed, since $\xi$ and $\tau$ are random
variables on $(\Omega,\mathcal{G},\Prob)$,
then $\xi 1_{\{\tau \leq t\}}$ is $\mathcal{G}$-measurable. To show it is
$\mathcal{G}_t$ measurable, it is equivalent to show $\forall b\in \mathbb{R},
\quad B(b):= \{\omega: \xi(\omega)1_{\{\tau \leq
t\}}(\omega) \leq b\} \in \Gt_t$.
Note that
\begin{eqnarray*}
B(b) \cap  \{t<\tau\}&=& \{\xi 1_{\{\tau \leq t\}} \leq b\}\cap  \{t < \tau
\}\\ 
&=& \left( \{\xi \leq b, \tau \leq t\} \cup  \{ 0 \leq b, t < \tau \} \right)\cap
\{t < \tau \}\\
&=&  \{ 0 \leq b, t < \tau \} \\
&=& 1_{\{b<0\}} \emptyset + 1_{\{b\geq 0\}} \{ t < \tau \} \\
&=&  1_{\{b<0\}} \left(\emptyset \cap  \{ t < \tau \}\right)  + 1_{\{b\geq
0\}} \left(\Omega \cap \{ t < \tau \} \right).
\end{eqnarray*}
Since $\emptyset, \Omega \in \Ft_t$ for all $t\geq 0$, we can take
$B_t(b) := 1_{\{b<0\}} \emptyset + 1_{\{b\geq 0\}} \Omega \in \Ft_t$,
such that 
\begin{eqnarray*}
B(b) \cap \{\tau > t\} = B_t(b) \cap  \{\tau > t\}.
\end{eqnarray*}
Therefore, we have $B(b) \in \Gt_t$.
\end{rmk}

\subsection{Spectrally Negative L\'evy Process with Finite Variation}\label{subsec:
SNFVLP}

Let $X$ be a spectrally negative L\'evy process with finite variation, then $X$ has a representation 
  \cite[page 56]{Kyprianou06}
\begin{eqnarray}\label{eqn:dsub}
X_t = ct - S_t,
\end{eqnarray}
 where $c>0$  and $S$ is a pure jump subordinator with L\'evy measure $\pi$.  (\ref{eqn:dsub}) is a special case of (\ref{eqn:
decompFVLP}) with $\pi'=0$ and $c>0$.  The L\'evy measure of $X$ is $\pi_X(dx)
= \pi(d(-x))= \pi((-x,-x+dx])$ on $\mathbb{R}_-$ and if $\pi$ admits a density
$\nu$ then $\pi(-dx) = \nu(-x)dx$.
The following concept is needed in analysing the path property of $\dl{X}$.

\begin{definition}[\cite
{Kyprianou06}]  Let $X$ be a L\'evy process.
A point $x\in \mathbb{R}$ is said to be {\it irregular} for  an open or closed
set $B$ if
$\Prob_x\left(\tau^B=0\right)=0$,
where the stopping time $\tau^B = \inf\{t>0: X_t \in B\}$.
\end{definition}

We know (\cite[Chapter 9, Proposition 15]{Doney01}) that for $X$
defined in
 (\ref{eqn:dsub}), $0$ is irregular for $(-\infty,0)$. Hence, starting at
$0$, it takes $X$ strictly positive
time to reach $(-\infty,0)$.
If we define $T_1 := \inf\{t>0: X_t <0\}$, then $\Prob(T_1 > 0) = 1$.
$T_1$ is the first jump time of $\dl{X}$ but may not be the first jump time
of $X$. 
We observe that  $\dl{X}$ is a pure-jump process as  $\dl{X}$ can only move
when $S$ jumps and  $\dl{X}$ cannot jump to a pre-specified level on $(-\infty,0)$
as $X$ can not, see  \cite[Exercise 5.9]{Kyprianou06}. Hence, the jump size
of $\dl{X}$ has no atoms and is strictly negative. The number of jumps of
 $\dl{X}$ on  the interval $[0,t]$, i.e., 
$n_t := \#\{s\in(0,t]: X_s = \dl{X}_s\}$, 
 is a discrete set and is a.s. finite. Moreover, we denote the arrival times
of $n_t$ by $(T_i)_i$,  the inter-arrival times by $(\delta_i)_i$, and the
jump sizes by $(\xi_i)_i$. Then we have the following lemma.

\comment{
This representation of the minimum will be useful for what follows.

We have just seen that $\dl{X}$ is a pure-jump process. In this section,
we shall show that it is also a renewal reward process.

\begin{definition}[Renewal Reward Process]
Let $\delta_1, \delta_2,\ldots,$ be a sequence of positive i.i.d.\ random
variables. Denote $T_n = \sum_{i=1}^n \delta_i$ for each $n\in \mathbb{N}$,
then $n_t := \sum_{i=1}^\infty \1_{\{T_i\leq t\}}$ is called a renewal process.

Let $Y_1,Y_2,\ldots$ be a sequence of i.i.d. random variables, then 
\begin{eqnarray*}
Z_t &:=& \sum_{i = 1}^{n_t}Y_i
\end{eqnarray*}
is a renewal-reward process. 

\end{definition}

\begin{rmk}
$(\delta_i)_i$ and $(Y_i)_i$ do not need to be independent. In particular,
$Y_i$ can be a function of $\delta_i$.
\end{rmk}
}

\begin{lemma}
For $X$ defined in  (\ref{eqn:dsub}), $\dl{X}$ can be written as a renewal-reward
process
\begin{eqnarray*}\label{eqn:rrp}
\dl{X}_t = -\sum_{i=1}^{n_t}\xi_i,
\end{eqnarray*}
where $(\delta_i, \xi_i)$ are i.i.d. random variables.
\end{lemma}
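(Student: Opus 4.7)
The plan is to express $\dl X$ as a piecewise-constant pure-jump process with jumps occurring exactly at the times $T_i$, and then to apply the strong Markov property of $X$ at each $T_i$ to obtain the i.i.d.\ property of the increments.

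First I would verify the summation formula. Since $\dl X$ can change only at times $s$ when $X_s = \dl X_s$ (otherwise $X$ is bounded away from the current minimum in a neighbourhood, and $\dl X$ stays constant there), the set $\{s \in (0,t]: X_s = \dl X_s\}$ is exactly the set of times at which $\dl X$ makes one of its (negative) jumps. Ordering these times as $T_1 < T_2 < \ldots$ and setting $\xi_i := \dl X_{T_{i-1}} - \dl X_{T_i} \geq 0$ (with $T_0 = 0$, $\dl X_0 = 0$) gives a telescoping sum
\[
\dl X_t = \dl X_{T_{n_t}} = \sum_{i=1}^{n_t}\bigl(\dl X_{T_i} - \dl X_{T_{i-1}}\bigr) = -\sum_{i=1}^{n_t}\xi_i.
\]
That $\xi_i > 0$ a.s.\ uses the observation already recorded in the paper that the jumps of $\dl X$ have no atoms and are strictly negative (since $X$ cannot creep downward, being a spectrally negative Lévy process of the form $ct - S_t$ with $c>0$).

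Next I would prove the i.i.d.\ property. At time $T_i$ we have $X_{T_i} = \dl X_{T_i}$ by the very definition of $T_i$. Each $T_i$ is an $\F$-stopping time, so by the strong Markov property of the Lévy process $X$, the shifted process $X^{(i)}_s := X_{T_i+s} - X_{T_i}$ is independent of $\mathcal{F}_{T_i}$ and has the same law as $X$. Moreover, because $X_{T_i}$ equals the running minimum at $T_i$, the first time $X^{(i)}$ enters $(-\infty,0)$ coincides with $T_{i+1} - T_i$, and $-X^{(i)}_{T_{i+1}-T_i} = \xi_{i+1}$. Applied to $X^{(i)}$ in the same way that $(\delta_1,\xi_1) = (T_1,-X_{T_1})$ was constructed from $X$, this shows
\[
(\delta_{i+1}, \xi_{i+1}) \stackrel{d}{=} (\delta_1, \xi_1) \text{ and is independent of } \mathcal{F}_{T_i}.
\]
Iterating over $i$ yields the full i.i.d.\ statement for $(\delta_i,\xi_i)_{i\geq 1}$.

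The main obstacle is the clean application of the strong Markov property at $T_i$: one needs that $T_i < \infty$ a.s.\ (which follows because $X$ drifts to $-\infty$ whenever $\E[S_1] > c$, and in general because $0$ is irregular but not inaccessible from above, and the excursions above the minimum are a.s.\ finite for a finite-variation process with no Gaussian part; alternatively one may argue pathwise that since $\dl X_t \to \liminf_{t\to\infty} X_t$ and $X$ has infinite activity on the negative side or drift, new minima are reached) and that $X_{T_i}$ coincides with the left limit $\dl X_{T_i-}$ in the right way to identify the shift. Once these measurability and regularity issues are handled, the decomposition and i.i.d.\ structure follow directly from the strong Markov property.
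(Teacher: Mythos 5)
Your proposal is correct and takes essentially the same route as the paper: the pure-jump/telescoping representation with $-\xi_i = \dl{X}_{T_i}-\dl{X}_{T_{i-1}} = X_{T_i}-X_{T_{i-1}}$, followed by the strong Markov property of $X$ at the $\F$-stopping times $T_i$ to get the i.i.d.\ structure. The finiteness of the $T_i$ that you flag as an obstacle is not actually required (when $X$ drifts to $+\infty$ the $T_i$ may be infinite with positive probability, and the pairs $(\delta_i,\xi_i)$ are then i.i.d.\ with a possibly defective law, which suffices for the small-time analysis the lemma is used for).
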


\begin{proof}
The analysis above shows that  $\dl{X}$ is a non-explosive marked point process
and can be written as  $\dl{X}_t = -\sum_{i=1}^{n_t}\xi_i$, where $-\xi_i
= \Delta \dl{X}_{T_i} =  \dl{X}_{T_i} - \dl{X}_{T_{i-1}} =X_{T_i} - X_{T_{i-1}}$.
Since $(T_i)_i$ are also jump times of L\'evy process $X$ and are stopping-times.
We have that $(\delta_i,\xi_i)_i$ are i.i.d. random variables due to the strong
Markov property of $X$.
\end{proof}

Instead of investigating the exact law of $\dl{X}$, we only need to analyse
the small-time behaviour of the process, which can be done with the help
of the next result, called
the Ballot Theorem \cite[Proposition
2.7]{Bertoin00}.

\comment{
\begin{thm}[Ballot Theorem, proposition 2.7, Bertoin\cite{Bertoin00}] We
have for every $t>0$, $y\leq 0$, and $z>-y$,
\begin{eqnarray*}
\Prob\left(l_1\in dt, Y_{l_1-}\in dy, \Delta Y_{l_1} \in dz\right)= \frac{-y}{ct}\Prob(Y_t
\in dy)\pi(dz)dt
\end{eqnarray*}
where $\Delta_t = Y_t - Y_{t-}$, $\pi$ is the L\'evy measure of $S$ and $l_1
= \inf\{t>0: Y_t > 0\}$.
\end{thm}

We have the following corollary by symmetry.}

\begin{lemma}[\cite{Bertoin00}]
Let $X$ be defined in (\ref{eqn:dsub}) and $T_1 := \inf\{t>0: X_t <0\}$.
Then, for every $t>0$, $z\geq 0$, and $u<-z$,
\begin{eqnarray*}
\Prob\left(T_1 \in dt, X_{T_1-}\in dz,\Delta X_{T_1}\in du \right) = \frac{z}{ct}\Prob(X_t
\in dz)\pi(-du)dt,
\end{eqnarray*} 
where $\Delta X_t = X_{t}-X_{t-}$ and $\pi$ is the L\'evy measure of $S$.
\end{lemma}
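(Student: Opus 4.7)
The plan is to treat the lemma as a direct corollary of Bertoin's Ballot Theorem (Proposition~2.7 of \cite{Bertoin00}), which is formulated for the upward first passage of a process of the form ``negative drift plus positive jumps.'' I would reduce the downward first passage of $X = ct - S$ to that setting by a sign flip, rather than redo the derivation from scratch.

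Concretely, I first introduce $Y_t := -X_t = -ct + S_t$, a finite-variation L\'evy process whose only jumps are upward and whose L\'evy measure on $(0,\infty)$ coincides with $\pi$. The first passage $T_1 = \inf\{t>0 : X_t < 0\}$ then coincides with $l_1 := \inf\{t>0 : Y_t > 0\}$, and on this event $Y_{l_1-} = -X_{T_1-} \leq 0$ while $\Delta Y_{l_1} = -\Delta X_{T_1} > -Y_{l_1-}$. Feeding these identifications into Bertoin's identity
\[
\Prob\left(l_1\in dt,\ Y_{l_1-}\in dy,\ \Delta Y_{l_1}\in d\zeta\right) \;=\; \frac{-y}{ct}\,\Prob(Y_t\in dy)\,\pi(d\zeta)\,dt, \qquad y\leq 0,\ \zeta > -y,
\]
and changing variables by $y = -z$ and $\zeta = -u$ (using $\Prob(Y_t \in -dz) = \Prob(X_t \in dz)$) yields the claimed formula with $z \geq 0$ and $u < -z$.

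The hard part, if one wanted a self-contained proof, would be Bertoin's identity itself, and the natural approach splits the joint law into two pieces. First, the compensation formula for the Poisson point process of jumps of $X$, applied to the predictable integrand $\mathbf{1}_{\{s\leq T_1\}}\mathbf{1}_{\{X_{s-}\in dz\}}$, rewrites the left-hand side as $\Prob(t < T_1,\, X_{t-}\in dz)\,\pi(-du)\,dt$. Second, and this is the genuine obstacle, one needs the ``ballot'' identity $\Prob(t < T_1,\, X_t \in dz) = \frac{z}{ct}\Prob(X_t \in dz)$, which Bertoin obtains by a time-reversal duality argument that crucially exploits the positivity of the drift $c$ and is compatible with the renewal structure of $\dl{X}$ established in the preceding lemma. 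Since the paper cites Bertoin for the result, my plan stops at the sign-flip reduction rather than reproducing that duality argument.
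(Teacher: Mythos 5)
Your proposal is correct and coincides with the paper's own treatment: the paper gives no independent proof, simply citing Bertoin's Proposition~2.7 and obtaining the stated identity ``by symmetry,'' i.e.\ exactly your sign flip $Y=-X$ with the change of variables $y=-z$, $\zeta=-u$ and the identification $T_1=l_1$. Your extra sketch of how Bertoin's identity itself would be proved (compensation formula plus the time-reversal ballot identity) goes beyond what the paper records but is consistent with it.
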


Hence the joint distribution of $(T_1,X_{T_1})$ is given by 
\begin{eqnarray}\label{eqn:BallJoint2}
\mathbb{P}(T_1\in dt,X_{T_1}\in dw)
&=& \left(\int_{z\in(0,\infty)} z\pi(z+d(-w))\mathbb{P}(X_t\in dz)\right)\frac{1}{ct}dt
\end{eqnarray}
for $w \leq 0$.
The following is another version of the Ballot theorem:
\begin{eqnarray*}
\Prob(T_1>t,X_t \in dx) = \frac{x}{ct}\Prob(X_t \in dx)
\end{eqnarray*} 
for every $t>0$ and $x \in [0,\infty)$. Since $X_t = ct-S_t \leq ct$, we
have
\begin{eqnarray*}
\Prob(T_1 >t) = \frac{1}{ct}\int_0^\infty x\Prob(X_t \in dx) = \frac{1}{ct}\int_0^{ct}
x\Prob(X_t \in dx) 
= \frac{1}{c}\E\left[ \mathbf{1}_{\{0\leq X_t \leq ct\}} \frac{X_t}{t} \right].
\end{eqnarray*}
Note as $\lim_{t \downarrow 0} \frac{S_t}{t} = 0$ a.s., we have for almost
all $\omega$, there exists $t_0(\omega)$, such that for all $t \in [0,t_0(\omega)]$,
$S_t(\omega) \leq c t$, hence $0\leq X_t(\omega)=ct-S_t(\omega) \leq ct$
and 
\begin{eqnarray}\label{eqn:limit_indicator}
 \lim_{t\downarrow 0}\mathbf{1}_{\{0\leq X_t \leq ct \}} = 1 \quad a.s.
\end{eqnarray} 
The dominated convergence theorem leads to
\begin{eqnarray}\label{eqn:limit_T1}
\lim_{t \downarrow 0} \Prob(T_1 >t) = \frac{1}{c}\E\left[ \lim_{t \downarrow
0}  \mathbf{1}_{\{0\leq X_t \leq ct\}} \frac{X_t}{t} \right] = \frac{1}{c}\cdot
c = 1.
\end{eqnarray}

\begin{proposition}\label{prop:td_kn_Convg}
Let $X$ be defined in (\ref{eqn:dsub}) and  let Assumption~\ref{asm: negLevyMeasure}
be satisfied.  Then the following limit exists for all $t$ a.s.
$$
\tilde{k}_t:=\lim_{h\downarrow 0} k^h_t = e^{\dl{X}_t}\Pi(X_t - \dl{X}_t),
$$
where $k_t^h$ is defined in (\ref{k^h}) for $h>0$ and $\Pi$ is defined in
(\ref{Pi}). 
\end{proposition}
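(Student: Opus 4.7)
The plan is to combine Lemma \ref{k^n} with a small-time analysis of the running infimum. Setting $x:=X_t-\dl{X}_t\ge 0$, it suffices to show
\begin{equation*}
\lim_{h\downarrow 0}\frac{1}{h}\,\E\!\left[(1-e^{x+\dl{X}_h})\1_{\{\dl{X}_h\le -x\}}\right]=\Pi(x),
\end{equation*}
since multiplying by $e^{\dl{X}_t}$ then yields the stated formula for $\td{k}_t$. I would begin by splitting this expectation via the first passage time $T_1=\inf\{s>0:X_s<0\}$. On $\{T_1>h\}$ the running infimum $\dl{X}_h=0$, so this event contributes nothing. On $\{T_1\le h\}$ the strong Markov property combined with the L\'evy property gives $\dl{X}_h=X_{T_1}+\dl{\hat X}_{h-T_1}$, where $\hat X$ is an independent copy of $X$. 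Splitting further according to whether $X_{T_1}\le -x$ or $X_{T_1}>-x$ gives the decomposition
\begin{equation*}
\E\!\left[(1-e^{x+\dl{X}_h})\1_{\{\dl{X}_h\le -x\}}\right]=A(h)+R_1(h)+R_2(h),
\end{equation*}
where $A(h):=\E[(1-e^{x+X_{T_1}})\1_{\{T_1\le h,\,X_{T_1}\le -x\}}]$ is the single-jump contribution and $R_1(h),R_2(h)$ are the correction terms arising from the further descent after $T_1$.

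Next, I would compute $A(h)$ explicitly via the Ballot-type joint law (\ref{eqn:BallJoint2}). After the substitution $u=-X_{T_1}-x\ge 0$, the definition of $\Pi$ in (\ref{Pi}) emerges and
\begin{equation*}
A(h)=\int_0^h \E\!\left[\frac{X_t}{ct}\,\Pi(X_t+x)\,\1_{\{X_t\ge 0\}}\right]dt.
\end{equation*}
Since $\Pi$ is non-increasing, $\Pi(0)\le\int_0^\infty u\,\pi(du)<\infty$ by Assumption \ref{asm: negLevyMeasure}, and $X_t/(ct)\le 1$ on $\{X_t\ge 0\}$, the integrand is uniformly bounded by $\Pi(0)$. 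Combining the a.s.\ convergences $X_t\to 0$, $S_t/t\to 0$ (so $X_t/(ct)\to 1$), (\ref{eqn:limit_indicator}), and the continuity of $\Pi$ (which follows from continuity of $\pi$), the integrand converges a.s.\ to $\Pi(x)$ as $t\downarrow 0$, whence dominated convergence and the fundamental theorem of calculus yield $A(h)/h\to\Pi(x)$.

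Then I would bound the remainders. For $R_1$, the inequalities $1-e^{\dl{\hat X}_{h-T_1}}\le -\dl{\hat X}_{h-T_1}\le \hat S_{h-T_1}$ (with $\hat S_s:=S_{T_1+s}-S_{T_1}$) and $e^{x+X_{T_1}}\le 1$ on $\{X_{T_1}\le -x\}$, together with $\E[\hat S_s]=s\int_0^\infty u\,\pi(du)$ and conditioning on $\mathcal F_{T_1}$, give
\begin{equation*}
R_1(h)\le h\,\Prob(T_1\le h)\int_0^\infty u\,\pi(du).
\end{equation*}
An analogous bound using $1-e^{x+\dl{X}_h}\le -(x+\dl{X}_h)\le \hat S_{h-T_1}$ on $\{X_{T_1}>-x,\,\dl{X}_h\le -x\}$ yields the same estimate for $R_2(h)$. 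Since $\Prob(T_1\le h)\to 0$ by (\ref{eqn:limit_T1}), both $R_1(h)/h$ and $R_2(h)/h$ vanish as $h\downarrow 0$, and the limit equals $\Pi(x)$ as claimed.

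The main obstacle is extracting the integral representation of $A(h)$ from the Ballot theorem and recognising the resulting L\'evy-measure factor as $\Pi(X_t+x)$ via the change of variables $u=-X_{T_1}-x$. Once that representation is in hand, the dominated-convergence step and the remainder bounds become routine consequences of Assumption \ref{asm: negLevyMeasure} and the small-time estimates (\ref{eqn:limit_indicator})--(\ref{eqn:limit_T1}) already established in the paper.
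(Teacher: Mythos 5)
Your argument is correct, and it reaches the paper's conclusion through a related but genuinely different decomposition. The paper exploits the renewal--reward structure of $\dl{X}$ and splits $\frac{1}{t}\E\bigl[1-e^{-(y-\dl{X}_t)^+}\bigr]$ according to the number of renewals of the running infimum: the term with exactly one renewal is computed from the Ballot joint law (with the extra survival factor $\bar F_{T_1}(t-s)$), while the multi-renewal term $\Lambda^2_t(y)$ is killed indirectly, by first showing $\Lambda^2_t(0)\to 0$ via the squeeze $\Lambda^1_t(0)\le\Lambda^0_t(0)\le\Pi(0)$ and then using monotonicity in $y$. You instead split only at the first passage time $T_1$: your main term $A(h)$ uses the Ballot law for $(T_1,X_{T_1})$ directly, with no conditioning on the absence of further renewals, and the post-$T_1$ corrections $R_1,R_2$ are controlled by the explicit moment bound $\E[\hat S_s]=s\int_0^\infty u\,\pi(du)<\infty$ (Assumption \ref{asm: negLevyMeasure}) together with $\Prob(T_1\le h)\to 0$; your identities $\dl{X}_{T_1}=X_{T_1}$ and $\dl{X}_h=X_{T_1}+\dl{\hat X}_{h-T_1}$ on $\{T_1\le h\}$, and the bounds $1-e^{x+X_{T_1}+\dl{\hat X}}\le$ (main term) $+\,\hat S_{h-T_1}$, all check out. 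What your route buys is the avoidance of the renewal-reward lemma and of the indirect $y=0$ squeeze, at the price of invoking the finite first moment of $\pi$ in the remainder (which the assumption supplies anyway); the paper's route keeps the remainder argument moment-free but is less direct. The only cosmetic issue is the reuse of $t$ both as the fixed time in $k^h_t$ and as the integration variable inside $A(h)$; since the statement has been reduced to a deterministic limit in the fixed parameter $x=X_t-\dl{X}_t$, this is harmless.
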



\begin{proof}
Recall that  $\dl{X}_t = -\sum_{i=1}^{n_t}\xi_i$ is a renewal-reward process,
where jump size $\xi_i$ and inter-arrival times $\delta_i$ are positive random
variables for all $i$, and $(\delta_i, \xi_i)$ are i.i.d. random variables.

By (\ref{eqn:limit_T1}), denote by $F$ the distribution function of $\delta_i$.
Then
\begin{eqnarray*}
\lim_{t\downarrow 0} F(t) =\lim_{t\downarrow 0} P(T_1\leq t) = 0 =F(0).
\end{eqnarray*}
Hence, $F$ is right continuous at zero, i.e.  $F(0)= F(0+) = 0$.

Denote by, for $t>0$ and $y\leq 0$,
\begin{eqnarray*}
\Lambda^0_t(y) &:=& \frac{1}{t}\mathbb{E}\left[1 - e^{-(y-\dl{X}_{t})^+}\right]
\\
\Lambda_t^1(y) &:=& \frac{1}{t}\mathbb{E}\left[1_{\{n_t=1\}}(1 -e^{-(\xi_1+y)^+})\right]
\\
\Lambda_t^2(y) &:=& \frac{1}{t} \mathbb{E}\left[\sum_{k=2}^\infty 1_{\{n_t
= k\}}\left(1 - e^{-\left(\sum_{i=1}^k\xi_i +y\right)^+}\right)\right].
\end{eqnarray*}
We have
\begin{eqnarray*}
\Lambda^0_t(y) = \mathbb{E}\left[1 - e^{-\left(y+\sum_{i=1}^{n_t}\xi_i\right)
^+}\right] = \Lambda^1_t(y) + \Lambda^2_t(y) .
\end{eqnarray*}

We next show that for $y\leq 0$,
\begin{eqnarray}
&&\lim_{t\rightarrow 0}\Lambda^1_t(y)=\Pi(-y), \mbox{ and  } \lim_{t\rightarrow
0}\Lambda^2_t(y) = 0. \label{eqn: nt_1}
\end{eqnarray}
Then, (\ref{eqn: nt_1}) gives the required conclusion.

Since $\delta_1$ and $\delta_2$ are independent, also noting (\ref{eqn:BallJoint2}),
we have 
\begin{eqnarray}
&& \mathbb{E}\left[1_{\{n_t=1\}}(1 - e^{-(\xi_1+y)^+})\right] \label{eqn:nt1_base}\\
&=& \mathbb{E}\left[1_{\{T_1 \leq t\}}1_{\{T_2-T_1 > t-T_1\}}(1 - e^{-(\xi_1+y)^+})\right]
\nonumber\\
&=&\int_0^t \int_{-y}^\infty \bar{F}_{T_1}(t-s)(1-e^{-x-y}) \mathbb{P}(T_1\in
ds,X_{T_1}\in d(-x))\nonumber\\
&=& \int_{s=0}^t \int_{x=-y}^\infty \bar{F}_{T_1}(t-s)(1-e^{-x-y})\left(\int_{z=0}^\infty
z \pi(z+dx))\mathbb{P}(X_s\in dz)\right) \frac{1}{cs} ds \nonumber\\
&=&\frac{1}{c}\int_{s=0}^t \bar{F}_{T_1}(t-s)\int_{z=0}^\infty \left(\int_{u=0}^\infty
(1-e^{-u})\pi(z-y+du)\right) z\frac{\mathbb{P}(X_s \in dz)}{s}ds\nonumber\\
&=& \frac{1}{c}\int_{s=0}^t \bar{F}_{T_1}(t-s) \left(\int_{z=0}^{cs} \Pi(z-y)
z\frac{\mathbb{P}(X_s \in dz)}{s}\right)ds.\nonumber
\end{eqnarray}
The last equality is due to $X_s = cs-S_s \leq cs$.

Since $S$ is a pure jump subordinator, we have (\cite[Lemma 4.11]{Kyprianou06}) $\lim_{t\rightarrow 0}\frac{S_t}{t} = 0 \quad \mbox{a.s.}$, 
which implies 
\begin{eqnarray}\label{eqn:sub_0}
\lim_{t\rightarrow 0}\frac{X_t}{t} = c. 
\end{eqnarray}
Using (\ref{eqn:sub_0}) and (\ref{eqn:limit_indicator}), the dominated convergence
theorem, continuity of $\Pi(\cdot)$, and $X_{0+} = 0$, we obtain
\begin{eqnarray*}
\lim_{s\downarrow 0}\int_{z=0}^{cs}  z \Pi(z-y)\frac{\mathbb{P}(X_s \in dz)}{s}
&=& \lim_{s\downarrow 0} \mathbb{E}\left[1_{\{0\leq X_s \leq cs\}}\frac{X_s}{s}\Pi(X_s-y)\right]
\\
&=& \mathbb{E}\left[\lim_{s\downarrow 0} 1_{\{0\leq X_s \leq cs\}}\frac{X_s}{s}\Pi(X_s-y)\right]\\
&=& c\Pi(-y).
\end{eqnarray*}

Taking the limit in (\ref{eqn:nt1_base}) gives
\begin{eqnarray*}
\lim_{t\downarrow 0} \Lambda_t^1(y) = \frac{1}{c}\lim_{s\downarrow 0}\int_{z=0}^{cs}
z \Pi(z-y)\frac{\mathbb{P}(X_s \in dz)}{s} = \Pi(-y). \\
\end{eqnarray*}
Here we have used the fact that if $g(\cdot)$ is a nonnegative function and
$\bar{F}(0+)=1$, then
\begin{eqnarray*}
\frac{1}{t}\int_0^t \ul{F}(t) g(s)ds\leq \frac{1}{t}\int_0^t \ul{F}(t-s)
g(s)ds \leq \frac{1}{t}\int_0^t g(s)ds
\end{eqnarray*}
and\begin{eqnarray*}
\lim_{t\rightarrow 0}\frac{1}{t}\int_0^t \ul{F}(t-s) g(s)ds = \lim_{t\rightarrow
0}\frac{1}{t}\int_0^t g(s)ds = g(0+).
\end{eqnarray*}

\comment{
\begin{cmt}[Application of Mean Value Theory]
Suppose for a continuous and non-positive function $g$, $g(0) := \lim_{t\rr
0}g(t)$ exits and finite, as we have $\ul{F}(0) = 1$, then
\begin{eqnarray*}
\lim_{t\rr 0}\frac{1}{t}\int_0^t \ul{F}(t-s)g(s)ds &\stackrel{MVT}{=}& \frac{d}{dt}\left(\int_0^t
\ul{F}(t-s)g(s)ds \right) \Big|_{t=0}\\
&=&\ul{F}(0)g(t)\Big|_{t=0}+\int_0^t f(t-s) (-g(s))ds \Big|_{t=0}\\
&=& g(0)  
\end{eqnarray*}
where the last equality is from the fact that $g(0)$ exits and continuous
on $[0,t]$, hence we can apply MVT again which yields there exists $u \in
(0,t)$, such that $0\leq \int_0^t f(t-s) (-g(s))ds \Big|_{t=0} =(-g(u))\int_0^t
f(t-s) ds \Big|_{t=0} = (-g(u)) F(t)  \Big|_{t=0} = 0$. 

In our case, $g$ is continuous as $X_t$ admits a density function as $\pi(\mathbb{R}_+)
= \infty$ by Cont \& Tankov\cite{Cont03}.

(end of comments) 
\end{cmt}

As for any $x\geq 0$, 
\begin{eqnarray}\label{eqn:Pibdd}
0<\Pi(x) &\leq&  \Pi(0) < \infty\\
1_{\{0\leq X_s \leq cs\}}\frac{X_s}{s} &\in& [0,c]\\
X_s &=& cs-S_s \leq cs \\
\lim_{s\downarrow 0} \frac{X_s}{s}&=& c \quad a.s.
\end{eqnarray}

As for all $s\geq 0$, $ 1_{\{ X_s\geq 0 \}}\frac{X_s}{s} = 1_{\{ 0\leq X_s
\leq cs \}}\frac{X_s}{s}$, now we show $\lim_{s\downarrow 0} 1_{\{ X_s\geq
0 \}}\frac{X_s}{s}= c$ $a.s.$ Since
\begin{eqnarray*}
\frac{X_s}{s}\leq  1_{\{X_s \geq 0\}}\frac{X_s}{s} \leq \frac{|X_s|}{s} \leq
\frac{cs+S_s}{s} = c+\frac{S_s}{s}
\end{eqnarray*}
From the fact that $\lim_{s\downarrow 0}\frac{X_s}{s} =c$ a.s. we have $\lim_{s\downarrow
0}\frac{S_s}{s} = 0$ $a.s.$, thus by taking the limit on both sides, we obtain
\begin{eqnarray*}
\lim_{s\downarrow 0} 1_{\{X_s \geq 0\}}\frac{X_s}{s} = c \quad a.s.
\end{eqnarray*}
}

We have proved the first  limit in (\ref{eqn: nt_1}). We next prove the second
limit in (\ref{eqn: nt_1}). Since 
\begin{eqnarray*}
\Lambda^0_t(0) = \frac{1}{t}\E\left[1 - e^{\dl{X}_t}\right] \leq  \frac{1}{t}\E\left[1
- e^{-S_t}\right] = \frac{1}{t}\left(1 - e^{-\Pi(0)t}\right) 
\leq  \Pi(0)
\end{eqnarray*}
and 
\begin{eqnarray*}
0\leq\Lambda^1_t(0) \leq \Lambda^0_t(0)\leq \Pi(0),
\end{eqnarray*}
the first limit in (\ref{eqn: nt_1}) implies
\begin{eqnarray*}
\lim_{t\rightarrow 0} \Lambda^0_t(0) = \lim_{t\rightarrow 0} \Lambda^1_t(0)=
\Pi(0),
\end{eqnarray*}
therefore
\begin{eqnarray*}\label{eqn:Lambda2y0}
\lim_{t\rightarrow 0} \Lambda^2_t(0) = 0.
\end{eqnarray*}

On the other hand, we know 
\begin{eqnarray*}
0 \leq \Lambda^2_t(y) \leq \Lambda^2_t(0) \quad \mbox{for all $y\leq 0$},
\end{eqnarray*}
which proves the second limit in (\ref{eqn: nt_1}). 
Hence, $\Lambda^0_t(y) = \Pi(-y)$ and $\td{\lambda}_t = \Lambda^0_t(X_t -
\dl{X}_t)=  \Pi(X_t - \dl{X}_t)$. 
\comment{
According to \cite{Cox62}, as we have shown the inter-arrival time $\delta$
is a continuous random variable and has no concentration of probability at
zero, the process $(T_i)$ is an orderly renewal process. i.e., $\Prob(n(0,h]
\geq 2) = O(h^2)$ as $h \ll 1$.

Recall we have $\dl{X}_t = -\sum_{i=1}^{n_t}\xi_i$ where $\xi_i >0$, thus
\begin{eqnarray*}
(y-\dl{X}_t)^+ = \left(y+\sum_{i=1}^{n_t}\xi_i\right)^+, 
\end{eqnarray*}

As we are interested only in the small time interval near $0$, consider $t\ll
1$,
\begin{eqnarray*}
&&\mathbb{E}[e^{-(y-\dl{X}_t)^+}-1]\\
&=& \mathbb{E}\left[e^{-\left(y+\sum_{i=1}^{n_t}\xi_i\right) ^+}-1\right]\\
&=&\mathbb{E}\left[1_{\{n_t = 0\}}(e^0-1) + 1_{\{n_t=1\}}\left(e^{-(\xi_1+y)^+}-1\right)+
\sum_{k=2}^\infty 1_{\{n_t = k\}}\left(e^{-\left(\sum_{i=1}^k \left(\xi_i+\frac{y}{k}\right)\right)^+}-1\right)\right]\\
&=& \mathbb{E}\left[1_{\{n_t=1\}}(e^{-(\xi_1+y)^+}-1)\right] +o(t)\\
&=& \mathbb{E}\left[1_{\{T_1 \leq t\}}1_{\{T_2-T_1 > t-T_1\}}(e^{-(\xi_1+y)^+}-1)\right]
+o(t)\\
&=&\int_0^t \int_{-y}^\infty (1-F_{T_1}(t-s))(e^{-x-y}-1) \mathbb{P}(T_1\in
ds,X_{T_1}\in d(-x)) + o(t).
\end{eqnarray*}
where the last equation* is from the fact $T_2-T_1$ and $T_1$ are i.i.d.

Recall from Eqn(\ref{eqn:BallJoint2}), we have
\begin{eqnarray*}
\mathbb{P}(T_1\in dt,X_{T_1}\in d(-x))
&=& \left(\int_{z\in(0,\infty)} z\pi(z+dx)\mathbb{P}(X_t\in dz)\right)\frac{1}{ct}dt
\\
\end{eqnarray*}

Denote $\pi(dx)=\nu(x)dx$, we obtain
\begin{eqnarray*}
&&\mathbb{E}[e^{-(y-\dl{X}_t)^+}-1]\\
&=&\int_{s=0}^t \int_{-y}^\infty (1-F_{T_1}(t-s))(e^{-x-y}-1) \mathbb{P}(T_1\in
ds,X_{T_1}\in d(-x)) + o(t)\\
&=& -\int_{s=0}^t \int_{x=-y}^\infty \bar{F}_{T_1}(t-s)(1-e^{-x-y})\left(\int_{z=0}^\infty
z \nu(z+x))\mathbb{P}(X_s\in dz)\right) \frac{1}{cs}dx ds + o(t)\\
&=&-\frac{1}{c}\int_{s=0}^t \bar{F}_{T_1}(t-s)\int_{z=0}^\infty \left(\int_{x=-y}^\infty
(1-e^{-x-y})\nu(z+x)dx\right) z\frac{\mathbb{P}(X_s \in dz)}{s}ds\\
&=&-\frac{1}{c}\int_{s=0}^t \bar{F}_{T_1}(t-s)\int_{z=0}^\infty \left(\int_{u=0}^\infty
(1-e^{-u})\nu(u+z-y)du\right) z\frac{\mathbb{P}(X_s \in dz)}{s}ds.
\end{eqnarray*}

Denote for $x\geq 0$,
\begin{eqnarray*}
\Pi(x):=\int_{u=0}^\infty (1-e^{-u})\nu(u+x)du,
\end{eqnarray*}
and $\Pi(0) = -\ln\E[e^{-S_1}] = \int_{u=0}^\infty (1-e^{-u})\pi(du)$ is
the Laplace exponent of $S$.

Note that as $X_s = cs-S_s \leq cs$,
\begin{eqnarray*}
\mathbb{P}(X_s\in dz)=0\quad \forall z>cs
\end{eqnarray*}
hence we can rewrite
\begin{eqnarray*}
\mathbb{E}[e^{-(y-\dl{X}_t)^+}-1]&=&-\frac{1}{c}\int_{s=0}^t \bar{F}_{T_1}(t-s)
\left(\int_{z=0}^{cs} \Pi(z-y) z\frac{\mathbb{P}(X_s \in dz)}{s}\right)ds
\end{eqnarray*}

Hence we need to prove that $\int_{z=0}^{cs}  z \Pi(z-y)\frac{\mathbb{P}(X_s
\in dz)}{s}$ is finite when $s \downarrow 0$, and if it holds, we have
\begin{eqnarray*}
\lim_{t\downarrow 0}\frac{1}{t}\mathbb{E}[e^{-(y-\dl{X}_t)^+}-1]
= -\frac{1}{c}\lim_{s\downarrow 0}\int_{z=0}^{cs} z \Pi(z-y)\frac{\mathbb{P}(X_s
\in dz)}{s} \\
\end{eqnarray*}

As for any $x\geq 0$, 
\begin{eqnarray}\label{eqn:Pibdd}
0<\Pi(x) &\leq& \int_{u=0}^\infty (1-e^{-{(u+x)}})\nu(u+x)du \leq  \int_{u=x}^\infty
(1-e^{-{u}})\nu(u)du = \Pi(0) < \infty\\
1_{\{0\leq X_s \leq cs\}}\frac{X_s}{s} &\in& [0,c]\\
\lim_{s\downarrow 0} \frac{X_s}{s}&=& c \quad a.s.
\end{eqnarray}
by dominated theorem and continuity of $\Pi(\cdot)$,
\begin{eqnarray*}
&&\lim_{s\downarrow 0}\int_{z=0}^{cs}  z \Pi(z-y)\frac{\mathbb{P}(X_s \in
dz)}{s}\\
&=& \lim_{s\downarrow 0} \mathbb{E}\left[1_{\{0\leq X_s \leq cs\}}\frac{X_s}{s}\Pi(X_s-y)\right]
\\
&=& \mathbb{E}\left[\lim_{s\downarrow 0} 1_{\{0\leq X_s \leq cs\}}\frac{X_s}{s}\Pi(X_s-y)\right]\\
&=& \Pi(X_{0+}-y)  \mathbb{E}\left[\lim_{s\downarrow 0} 1_{\{0\leq X_s \leq
cs\}}\frac{X_s}{s}\right]\\
&=& c\Pi(-y) < \infty
\end{eqnarray*}
}
\end{proof}

\begin{rmk}{\rm
Note that  $\Pi(\cdot)$ is continuous as $\pi(dx)$ is. $\Pi(0) = -\ln \E[e^{-S_1}]
$ is the Laplace exponent of $S$ from the L\'evy-Khintchine formula, and
$0<\Pi(x) \leq \Pi(0) < \int_0^\infty u \pi(du)< \infty$ for all $x\geq 0$
by Assumption \ref{asm: negLevyMeasure}. Therefore, $\Pi$ is bounded on $\mathbb{R}_+.$
}
\end{rmk}


\comment{
\begin{rmk}[$X$ and $\dl{X}$]
{\rm Note that $\dl{X}$ is a simple point process and  that, in a small time
interval, if there is a jump for $\dl{X}$, this jump may be not the first
jump of $X$. }
\comment{
However, for t small before the first jump of $\dl{X}$, we still have $(y-\dl{X}_t)^+
= (y-{X}_t)^+$, thus
\begin{eqnarray*}
\frac{1}{t}\mathbb{E}[e^{-(y-\dl{X}_t)^+}-1] = \frac{1}{t}\mathbb{E}[e^{-(y-{X}_t)^+}-1]=
\int_{-\infty}^y (e^{-(y-x)}-1)\frac{\Prob(X_t \in dx)}{t} = 
\end{eqnarray*}
}
\end{rmk}
}

\subsection{L\'evy Process with Finite Variation}\label{subsec: FVLP}
Suppose $X$ is a L\'evy process with finite variation. It then has a representation
(\ref{eqn: decompFVLP}) and we assume that  Assumption~\ref{asm: negLevyMeasure}
holds. 
Note that the path properties and techniques used in Subsection 3.2 no longer
hold.
In (\ref{eqn: decompFVLP}), denote the drift and negative jump components
as 
\begin{eqnarray*}\label{eqn: Ztc}
Z_t(c) := ct - S_t.
\end{eqnarray*}
Then we first claim the following result for $Z_t(c)$.
\comment{
Recall in spectrally negative case, we have the existence of limit of $\phi_{t}$
as $t\rr 0$ when $c<0$. In this part, we first prove for the case where $c\geq
0$. In this way we generalize result for the drift coefficient $c$ from negative
to any real value. }
\begin{lemma}\label{prop:Zyleq0}
For any $c \in \mathbb{R}$ and $y\leq 0$ the following limit exists:
\begin{eqnarray}\label{eqn: Zmin}
\lim_{h\rr 0}\frac{1}{h}\E\left[1 - e^{-(y-\dl{Z}_h(c))^+}\right] &=&  -c\mathbf{1}_{\{y=0\}}\mathbf{1}_{\{c<0\}}
+ \Pi(-y),
\end{eqnarray}
where $\Pi$ is defined in (\ref{Pi}).
\end{lemma}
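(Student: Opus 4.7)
The argument naturally splits on the sign of $c$.

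When $c>0$, the process $Z(c)=ct-S$ is exactly the spectrally negative finite-variation L\'evy process of Subsection \ref{subsec: SNFVLP}, and the desired identity $\lim_{h\downarrow 0}\tfrac{1}{h}\E[1-e^{-(y-\dl{Z}_h(c))^+}]=\Pi(-y)$ is precisely the limit $\lim_{t\downarrow 0}\Lambda_t^0(y)=\Pi(-y)$ established during the proof of Proposition \ref{prop:td_kn_Convg} (just relabel $t$ as $h$ and $X$ as $Z(c)$). Since $\1_{\{c<0\}}=0$ in this regime, this matches (\ref{eqn: Zmin}) and nothing new is required.

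When $c\leq 0$, I would exploit the key structural fact that $Z_t(c)=ct-S_t$ is non-increasing in $t$ (the drift is non-positive and $S$ is a subordinator), so the running minimum coincides with the terminal value: $\dl{Z}_h(c)=Z_h(c)=ch-S_h$. Consequently
\[
(y-\dl{Z}_h(c))^+ = (y+T_h)^+, \qquad T_h := -ch+S_h,
\]
where $T=(T_h)_{h\geq 0}$ is itself a subordinator with drift $d:=-c\geq 0$ and L\'evy measure $\pi$. The problem reduces to computing $\lim_{h\downarrow 0}\tfrac{1}{h}\E[1-e^{-(y+T_h)^+}]$ for a subordinator, with $y\leq 0$.

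For $y=0$: since $T_h\geq 0$, $(y+T_h)^+=T_h$, and the L\'evy-Khintchine formula for $T$ gives $\E[e^{-T_h}]=\exp(-h(d+\Pi(0)))$, hence
\[
\lim_{h\downarrow 0}\tfrac{1}{h}\E[1-e^{-T_h}] = d+\Pi(0) = -c\,\1_{\{c<0\}}+\Pi(0),
\]
and both sub-cases $c=0$ and $c<0$ agree with (\ref{eqn: Zmin}). For $y<0$ I would exploit that $f(x):=1-e^{-(y+x)^+}$ vanishes on $[0,-y)$ and is $C^1$ near the origin with $f(0)=f'(0)=0$. Applying It\^o/Dynkin's formula to $f(T_{\cdot})$ yields
\[
\E[f(T_h)] = \int_0^h \E[\mathcal{A}f(T_s)]\,ds, \quad \mathcal{A}f(x):=d f'(x)+\int_0^\infty \bigl(f(x+u)-f(x)\bigr)\pi(du),
\]
and taking $h\downarrow 0$ with right-continuity at $s=0$ gives $\mathcal{A}f(0)=0+\int_0^\infty(1-e^{-(y+u)^+})\pi(du)=\Pi(-y)$, since $f'(0)=0$ kills the drift contribution.

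The main technical hurdle is the limit interchange $\tfrac{1}{h}\int_0^h\E[\mathcal{A}f(T_s)]\,ds\to \mathcal{A}f(0)$ in the $y<0$ step, which would be handled by dominated convergence using the uniform bound $|\mathcal{A}f(x)|\leq d+\Pi(0)<\infty$ (Assumption \ref{asm: negLevyMeasure}) together with continuity of $\mathcal{A}f$ at $0$ (from continuity of $\Pi$ and hence $\pi$). An alternative that sidesteps It\^o is a first-jump decomposition in the spirit of Subsection \ref{subsec: SNFVLP}: on the high-probability event of no jump of size $>-y$ in $[0,h]$, $T_h$ is $O(h)$ and the integrand vanishes; on the complementary event (probability $\sim h\,\pi((-y,\infty))$) the value of $T_h$ is dictated by its single large jump, and integrating against $\pi(du)$ recovers $\Pi(-y)$.
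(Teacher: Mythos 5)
Your proposal is correct and follows the same skeleton as the paper's proof: the split into $c>0$ (defer to the spectrally negative case of Proposition \ref{prop:td_kn_Convg}) versus $c\leq 0$, the key observation that for $c\leq 0$ the path $Z_h(c)=ch-S_h$ is non-increasing so $\dl{Z}_h(c)=Z_h(c)$, and the Laplace-exponent computation $\E[e^{-(-ch+S_h)}]=e^{-h(-c+\Pi(0))}$ for the boundary case $y=0$. Where you genuinely diverge is the sub-case $y<0$: the paper simply notes that $f(x)=1-e^{-(y+x)^+}$ is bounded, continuous and vanishes on a neighbourhood of zero, and then quotes the standard small-time asymptotic $\lim_{h\downarrow 0}\frac{1}{h}\E[f(X_h)]=\int f\,d\pi_X$ (\cite[Corollary 8.9]{Sato99}) applied to $-Z_h(c)$, which gives $\Pi(-y)$ in one line with no smoothness needed. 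You instead re-derive this instance of that result via Dynkin's formula for the subordinator $T_h=-ch+S_h$, computing $\mathcal{A}f(0)=\Pi(-y)$ and passing to the limit by dominated convergence; this is more self-contained but carries a small technical burden you gloss over: $f$ is not $C^1$ on all of $\mathbb{R}_+$ (it has a kink at $x=-y$), so the change-of-variables/Dynkin step needs either a mollification argument or the observation that the drift part of $T$ spends Lebesgue-zero time at the kink, and your bound $|\mathcal{A}f|\leq -c+\Pi(0)$ together with continuity of $\Pi$ then closes the interchange of limits. Your alternative first-jump sketch likewise needs the (true but unproved) estimate that the accumulation of jumps of size at most $-y$ exceeds level $-y$ by time $h$ only with probability $o(h)$. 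In short: same structure, but for $y<0$ the paper buys the limit with a citation where you rebuild it by hand; your route is fine once the kink at $-y$ is addressed.
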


\begin{proof} For $c>0$ the limit (\ref{eqn: Zmin}) has been proved in the
previous subsection. We now consider the case of $c\leq 0$. 
Note that $Z_h(c)$ is decreasing in $h$ and  $\dl{Z}_h(c)=Z_h(c)$. We split
the proof into two cases.

(i) $y=0$: We have
\begin{eqnarray*}
\lim_{h\rr 0}\frac{1}{h}\E\left[1 - e^{-(y-\dl{Z}_h(c))^+}\right] =\lim_{h\rr
0}\frac{1}{h}\E\left[1 - e^{ch-S_h}\right]  &=& -c + \Pi(0).
\end{eqnarray*}

(ii) $y<0$: Take the function $f(x) := 1 - e^{-(y+x)^+}$ which is bounded,
continuous, and vanishes in a neighbourhood of zero: take $\epsilon < -y$,
then for any $x\in (0,\epsilon)$, we have $f(x) = 0$.
Hence
\begin{eqnarray*}
\lim_{h\rr 0}\frac{1}{h}\E\left[ 1- e^{-(y-\dl{Z}_h(c))^+}\right]&=& \lim_{h
\rr 0} \frac{1}{h}\E\left[f(-Z_h(c))\right] 
= \int_{\mathbb{R}} f(x) \pi(dx).
\end{eqnarray*}
The second equality is due to \cite[Corollary 8.9]{Sato99} and $\pi$ being
the L\'evy measure of $-Z_h(c)$. Therefore,
\begin{eqnarray*}
\int_{\mathbb{R}} f(x) \pi(dx)  &=& \int_{-y}^\infty (1 - e^{-(y+x)})\pi(dx)=
\int_0^\infty (1-e^{-u})\pi(-y+du) = \Pi(-y),
\end{eqnarray*}
which proves (\ref{eqn: Zmin}).
\end{proof}

\begin{proposition}\label{prop:Squeezing_positiveDrift}
Let $X_t$ be defined in (\ref{eqn: decompFVLP})
and  let Assumption~\ref{asm: negLevyMeasure} be satisfied. Then  
the following limit exists for all $t$ a.s.
\begin{equation}\label{eqn: td_k_temp}
\tilde{k}_t:=\lim_{h\downarrow 0} k^h_t = e^{\dl{X}_t}\left(-c\mathbf{1}_{\{X_t-\dl{X}_t=0\}}\mathbf{1}_{\{c<0\}}+\Pi(X_t-\dl{X}_t)
\right), 
\end{equation}
where $k_t^h$ is defined in (\ref{k^h}) for $h>0$ and $\Pi$ is defined in
(\ref{Pi}). The instantaneous likelihood process $\tilde{\lambda}_t$ defined in Corollary \ref{cor: instantaneousProcess} is given by
\begin{equation}\label{eqn: td_lambda_temp}
\tilde{\lambda}_t = -c\mathbf{1}_{\{X_t-\dl{X}_t=0\}}\mathbf{1}_{\{c<0\}}+\Pi(X_t-\dl{X}_t). 
\end{equation}

\end{proposition}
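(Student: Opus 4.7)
The plan is to reduce Proposition~\ref{prop:Squeezing_positiveDrift} to Lemma~\ref{prop:Zyleq0} via a pathwise sandwich on the running minimum $\dl{X}_h$, then conclude by the squeeze theorem. By Lemma~\ref{k^n}, $k_t^h = e^{\dl{X}_t}\,\Lambda^{0}_h(\dl{X}_t - X_t)$ where $\Lambda^{0}_h(y) := \tfrac{1}{h}\mathbb{E}[1 - e^{-(y - \dl{X}_h)^{+}}]$, so the task reduces to showing that for every $y \leq 0$,
\begin{equation*}
\lim_{h\downarrow 0}\Lambda^{0}_h(y) \;=\; -c\,\mathbf{1}_{\{y = 0\}}\mathbf{1}_{\{c<0\}} + \Pi(-y).
\end{equation*}
Substituting $y = \dl{X}_t - X_t \leq 0$ then yields \eqref{eqn: td_k_temp}, and \eqref{eqn: td_lambda_temp} follows immediately from Corollary~\ref{cor: instantaneousProcess}.

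The first step is the pathwise sandwich $\dl{Z}_h(c) \leq \dl{X}_h \leq \dl{Z}_h(c) + S'_h$. The lower inequality is clear from $X_s = Z_s(c) + S'_s \geq Z_s(c)$ since $S'_s \geq 0$. For the upper, picking any $s^{\ast} \in [0,h]$ with $Z_{s^{\ast}}(c) = \dl{Z}_h(c)$ gives $\dl{X}_h \leq X_{s^{\ast}} = \dl{Z}_h(c) + S'_{s^{\ast}} \leq \dl{Z}_h(c) + S'_h$. Since $y \leq 0$ and $z \mapsto 1 - e^{-(y-z)^{+}}$ is nonincreasing, this implies
\begin{equation*}
\tfrac{1}{h}\mathbb{E}\bigl[1 - e^{-(y - \dl{Z}_h(c) - S'_h)^{+}}\bigr] \;\leq\; \Lambda^{0}_h(y) \;\leq\; \tfrac{1}{h}\mathbb{E}\bigl[1 - e^{-(y - \dl{Z}_h(c))^{+}}\bigr],
\end{equation*}
and the right-hand side converges to $-c\,\mathbf{1}_{\{y=0,c<0\}} + \Pi(-y)$ directly by Lemma~\ref{prop:Zyleq0}.

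The substantive step is showing the left-hand side has the same limit, i.e.\ that the perturbation by $S'_h$ is asymptotically negligible. A crude Lipschitz bound $|(y-a)^{+} - (y-b)^{+}| \leq |a-b|$ together with $|\dl{X}_h - \dl{Z}_h(c)| \leq S'_h$ and $\mathbb{E}[S'_h] = O(h)$ only gives an $O(1)$ gap, so a finer argument is required. My plan is to exploit the independence of $Z(c)$ and $S'$: conditioning on $\dl{Z}_h(c)$, the integrand gap between the two sides of the sandwich is supported on the event where the shift by $S'_h$ actually alters the truncated quantity $(y-\cdot)^{+}$. For $y < 0$, this forces $\dl{Z}_h(c) \leq y$, an event whose probability tends to $0$ as $h \downarrow 0$ since $\dl{Z}_h(c) \to 0$ a.s. A conditional Markov estimate on $\mathbb{P}(S'_h > y - \dl{Z}_h(c))$ then bounds the error by $\mathbb{E}[S'_h]\,\mathbb{P}(\dl{Z}_h(c) \leq y) = o(h)$, and the squeeze closes.

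The main obstacle is the boundary case $y = 0$, where $\{\dl{Z}_h(c) \leq 0\}$ holds a.s. and the vanishing-probability argument breaks. Here I would analyse the lower-bound integrand $(1 - e^{\dl{Z}_h(c) + S'_h})^{+}$ via a small-time decomposition of the finite-variation L\'evy process by jump behaviour on $[0,h]$: paths with no jumps contribute the $-c\,\mathbf{1}_{\{c<0\}}$ drift term (from $\dl{Z}_h(c) = ch$ when $c<0$, and $0$ otherwise), paths with a single negative jump of $S$ contribute $\Pi(0)$ via the L\'evy measure, and all remaining paths (with a positive $S'$-jump or multiple jumps) contribute $o(h)$ and vanish in the first-order limit. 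Matching this with the upper bound at $y=0$ and invoking the squeeze theorem completes the proof of both \eqref{eqn: td_k_temp} and \eqref{eqn: td_lambda_temp}.
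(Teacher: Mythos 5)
Your skeleton (reduce to $\lim_{h\downarrow 0}\frac1h\E[1-e^{-(y-\dl{X}_h)^+}]$ for $y\le 0$, sandwich $\dl{X}_h$ between quantities built from $Z_h(c)=ch-S_h$, invoke Lemma~\ref{prop:Zyleq0}) matches the paper's, and your upper bound is exactly the paper's: $\dl{X}_h\ge \dl{Z}_h(c)$ plus Lemma~\ref{prop:Zyleq0} gives the $\limsup$. The genuine gap is in your lower bound at the boundary case $y=0$, which is precisely the case that produces the drift term $-c\1_{\{c<0\}}$. Your plan there decomposes paths on $[0,h]$ into ``no jumps'', ``a single negative jump of $S$'' and ``the rest'', but this decomposition is only meaningful when $S$ and $S'$ are compound Poisson; Assumption~\ref{asm: negLevyMeasure} allows infinite-activity L\'evy measures (the gamma and variance gamma examples the paper emphasizes), for which a.s.\ every path has infinitely many jumps on $[0,h]$ and your conditioning events are degenerate. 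Making it rigorous would require thresholding jumps at some $\delta>0$ and controlling the residual small-jump subordinator, which can itself push $Z(c)$ below zero when $c>0$ --- i.e.\ you would end up re-deriving the hard content behind Lemma~\ref{prop:Zyleq0} (the ballot-theorem analysis of Subsection~3.2) rather than reusing it. A smaller, fixable slip: in the $y<0$ step you bound the error by $\E[S'_h]\,\Prob(\dl{Z}_h(c)\le y)$, but no first-moment condition is imposed on $\pi'$, so $\E[S'_h]$ may be infinite; bound the integrand gap by $\1_{\{\dl{Z}_h(c)<y\}}\bigl(1-e^{-S'_h}\bigr)$ and use independence with $\E[1-e^{-S'_h}]=O(h)$ instead.

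The paper closes the lower bound, including $y=0$, with a device you could substitute for the jump-count analysis: for any $\epsilon>0$, on $\{S'_h\le \epsilon h\}$ one has $X_h\le Z_h(c+\epsilon)$, hence (using $\dl{X}_h\le 0$) $\dl{X}_h\le \1_{\{S'_h\le\epsilon h\}}\dl{Z}_h(c+\epsilon)$ a.s., so by independence of $S$ and $S'$,
\begin{equation*}
\frac1h\E\left[1-e^{-(y-\dl{X}_h)^+}\right]\;\ge\;\Prob\!\left(\tfrac{S'_h}{h}\le\epsilon\right)\frac1h\E\left[1-e^{-(y-\dl{Z}_h(c+\epsilon))^+}\right].
\end{equation*}
Since $S'_h/h\to 0$ a.s., Lemma~\ref{prop:Zyleq0} applied with drift $c+\epsilon$ (it holds for every real drift) gives $\liminf_{h\downarrow 0}\ge -(c+\epsilon)\1_{\{y=0\}}\1_{\{c+\epsilon<0\}}+\Pi(-y)$, and letting $\epsilon\downarrow 0$ yields the stated limit simultaneously for $y<0$ and $y=0$. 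With that replacement (or a fully worked thresholded version of your decomposition) your argument is complete; as written, the $y=0$ case does not go through for infinite-activity $X$.
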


\begin{proof}
The expression of $\tilde{\lambda}_t$ in (\ref{eqn: td_lambda_temp}) is an immediate result of Corollary \ref{cor: instantaneousProcess} and (\ref{eqn: td_k_temp}). To prove (\ref{eqn: td_k_temp})  we only need to show that for all $y\leq 0$,
\begin{eqnarray}\label{temp1}
 \lim_{h\rightarrow 0 }\frac{1}{h}\E\left[1 - e^{-(y-\dl{X}_h)^+}\right]
=-c\mathbf{1}_{\{y=0\}}\mathbf{1}_{\{c <  0\}}+\Pi(-y). \label{eqn_t1}
\end{eqnarray}

Since  $f(x)= 1 - e^{-(y-x)^+}$ is a decreasing function of $x$ on $\mathbb{R}_-$
and $X_h = ch-S_h + S'_h \geq ch-S_h = Z_h(c)$ for all $\omega$ and $h> 0$,
 we have $\dl{X}_h \geq \dl{Z}_h(c)$ and 
\begin{eqnarray*}
\frac{1}{h}\E\left[1 - e^{-(y-\dl{X}_h)^+}\right]&\leq & \frac{1}{h}\E\left[1
- e^{-(y-\dl{Z}_h(c))^+}\right].
\end{eqnarray*}
Using Lemma~\ref{prop:Zyleq0} we obtain
\begin{eqnarray*}
\limsup_{h\rr 0}\frac{1}{h}\E\left[1 - e^{-(y-\dl{X}_h)^+}\right] \leq -c\mathbf{1}_{\{y=0\}}\mathbf{1}_{\{c
<  0\}}+\Pi(-y).
\end{eqnarray*}

Take any $\epsilon>0$, on the set $\{S'_h \leq \epsilon h \}$ we have:
\begin{eqnarray*}
X_h = ch-S_h + S'_h\leq ch - S_h + \epsilon h = Z_h(c+\epsilon),
\end{eqnarray*}
which yields
\begin{eqnarray*}
\dl{X}_h \leq \dl{Z}_h(c+\epsilon) \quad \mbox{on $\{S'_h \leq \epsilon h
\}$}.
\end{eqnarray*}
Moreover, as $\dl{X}_h \leq 0$ for all $h\geq 0$, we have almost surely,
\begin{eqnarray*}
\dl{X}_h = 1_{\left\{S'_h\leq \epsilon h\right\}} \dl{X}_h + 1_{\left\{S'_h
> \epsilon h\right\}} \dl{X}_h \leq 1_{\left\{S'_h\leq \epsilon h\right\}}
\dl{X}_h \leq 1_{\left\{S'_h\leq \epsilon h\right\}} \dl{Z}_h(c+\epsilon)
.
\end{eqnarray*}
We have
\begin{eqnarray*}
\frac{1}{h}\E\left[1 - e^{-(y-\dl{X}_h)^+}\right]
&\geq & \frac{1}{h}\E\left[1 - e^{-(y-1_{\{S'_h\leq \epsilon h\}}\dl{Z}_h(c+\epsilon))^+}\right]\\
&=& \frac{1}{h}\E\left[ 1_{\left\{S'_h\leq \epsilon h\right\}}\left(1 -e^{-\left(y-\dl{Z}_h(c+\epsilon)\right)^+}\right)\right]\nonumber\\
&=&\Prob\left(\frac{S'_h}{h}\leq \epsilon \right)\frac{1}{h}\E\left[1 - e^{-(y-\dl{Z}_h(c+\epsilon)^+}
\right].
\end{eqnarray*}
The last equality is due to the independence of $S$ and $S'$.
Since $\lim_{h\rightarrow 0} \frac{S'_h}{h} = 0 $ a.s., which implies $\lim_{h\rightarrow
0} \Prob\left( \frac{S'_h}{h}\leq \epsilon \right)=1$,  we have
\begin{eqnarray*}
\liminf_{h\rr 0} \frac{1}{h}\E\left[1 - e^{-(y-\dl{X}_h)^+}\right] \geq -(c+\epsilon)\mathbf{1}_{\{y=0\}}\mathbf{1}_{\{c+\epsilon
<  0\}}+\Pi(-y).
\end{eqnarray*}
Let $\epsilon \downarrow 0$ in the above inequality, we obtain
\begin{eqnarray*}
\liminf_{h\rr 0} \frac{1}{h}\E\left[1 - e^{-(y-\dl{X}_h)^+}\right] \geq -c\mathbf{1}_{\{y=0\}}\mathbf{1}_{\{c<
 0\}}+\Pi(-y).
\end{eqnarray*}
We have proved (\ref{temp1}). 
\comment{
Recall all the cases above for L\'evy processes with finite variation, we
can conclude 

\begin{eqnarray*}
\lim_{s\rr 0}\frac{1}{s}\E\left[1 - e^{-(y-\dl{X}_s)^+}-1\right]= -c\mathbf{1}_{\{c
< 0\}}\mathbf{1}_{\{y=0\}}+\Pi(-y),
\end{eqnarray*}
where $c$ as the drift can be identified as $$ c=\lim_{s \downarrow 0}\frac{X_s}{s}
\quad a.s.$$  
}
\end{proof}

\begin{rmk}
{\rm Note that if $X$ is  a  L\'evy process with a L\'evy measure $\pi_X$
and $f$ is a bounded continuous function that vanishes in a neighbourhood
of zero, then (\cite[Corollary 8.9]{Sato99})
\begin{equation}\label{temp2}
\lim_{h\downarrow 0}\frac{1}{h}\E[f(X_h)] = \int_{\mathbb{R}}f(x)\pi_X(dx).
\end{equation}
In our case, we aim to compute
\begin{eqnarray*}
\lim_{h\downarrow 0}\frac{1}{h}\E\left[1 - e^{-(y-\dl{X}_h)^+} \right].
\end{eqnarray*}
However, we cannot apply (\ref{temp2}) directly  as $\dl{X}$ is not a L\'evy
process if $X$ is not a monotone process. 
Proposition \ref{prop:Squeezing_positiveDrift} can be viewed as an extension
 for function $f(x)=1 - e^{-(y-x)^+}$ and L\'evy process $X$ with finite
variation.
}
\end{rmk}

\subsection{Indistinguishability of Likelihood Process and Intensity Process}\label{subsec:
compensatorIntensity}
We have proved the existence of the instantaneous likelihood process $\td{\lambda}$
when $X$ is a L\'evy process with finite variation.
Heuristically the intensity process $\lambda$ of the $\G$-compensator should
be equal to $\tilde \lambda$ on the set $\{\tau>t\}$. However, they are not
necessarily the same.

\begin{emp}[\cite{GJZ09}]\rm{
Define a stopping time $\tau:=\inf\{t>0: W_t>y\}$ where $W$ is a Brownian
motion and $y>0$ is a constant. Suppose  $\F$ is the natural filtration of
$W$. We have $\tau$ is a $\F$-stopping time and
\begin{eqnarray*}
\frac{1}{h}\Prob(t<\tau< t+h |\mathcal{F}_t)  &=&\frac{1_{\{\tau>t\}}}{h}\int_0^{h}\frac{|W_t
- y|}{\sqrt{2\pi t^3}}e^{-\frac{(W_t-y)^2}{2t}}dt \stackrel{h\downarrow 0}{\longrightarrow}
0.
\end{eqnarray*}
I.e., $\td{\lambda}_t\equiv 0$ for all $t\geq 0$. As $\tau$ is predictable
under $\F$, the compensator of $N_t=1_{\{\tau \leq t\}}$ is $N_t$, which
indicates the intensity $\lambda$ does not exist. 
}
\end{emp}

Aven's condition in the next lemma provides a sufficient condition that
ensures  $\td{\lambda}$ and $\lambda$ are indistinguishable. 

\comment{
In general, one does not have the information about the existence of intensity
$\lambda$. In this case, in order to investigate $\lambda$, we need to impose
some technical conditions on the tractable instantaneous likelihood  $\td{\lambda}$,
such that the existence and equivalence of $\lambda$ can follow.

\begin{thm}[Aven's Condition\cite{Aven85}]
Let $(h_n)\downarrow 0$, and define
\begin{equation*}
\lambda^n_t:= \frac{1}{h_n}\mathbb{E}\left[N_{t+h_n}-N_t|\mathcal{G}_t\right],
\end{equation*}
where $N$ is an $\G$-adapted point-process.
\comment{
\footnote{Note that for each $t$, $(\lambda^n_t)_{n\geq0}$ is a submartingale.

} 
}

Assume that the following statements hold  with $(\lambda_t)$ and $(d_t)$
non-negative and measurable processes:
\begin{enumerate}
\item for each $t$, $\lim_{n\rightarrow \infty}\lambda^n_t=\lambda_t$ $a.s.$
\item for each $t$ there exists for almost all $\omega$ an $n_0 = n_0(t,\omega)$
such that
\begin{eqnarray*}
|\lambda^n_t(s,\omega)-\lambda(s,\omega)|\leq d(s,w),\quad s \leq t, n\geq
n_0
\end{eqnarray*}
\item $\int_0^t d(s)ds < \infty$ $a.s.$ for each $t$.
\end{enumerate}

Then, $N_t - \int_0^t\lambda_s ds$ is a $\G$-martingale. i.e. $\int_0^t \lambda_s
ds$ is the $\G$-compensator of $N$. 

\end{thm}
}

\begin{lemma}[\cite{Aven85}]\label{Aven's Condition}
If $\lim_{h\rr 0}\lambda^h_t = \td{\lambda}_t$ exists and $\lambda^h_t$ is
uniformly bounded for $t>0$ and $h>0$ $a.s.$, then on $\{\tau>t\}$, $N_t
- \int_0^t \td{\lambda}_s ds$ is a $\G$-martingale, i.e., $\int_0^t \tilde
\lambda_s ds$ is the $\G$-compensator of $N$.
\end{lemma}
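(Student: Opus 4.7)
The plan is to verify directly that the process $M_t := N_t - \int_0^t \tilde\lambda_s\,ds$ satisfies the martingale identity
$\E[(N_t-N_s)\mathbf{1}_A] = \E[\int_s^t \tilde\lambda_u\,du\cdot \mathbf{1}_A]$
for every $0\leq s<t$ and every $A\in\mathcal{G}_s$. First I would note that $\tilde\lambda_u$ is $\mathcal{G}_u$-measurable as the a.s.\ pointwise limit of the $\mathcal{G}_u$-measurable random variables $\lambda^h_u$, and that $\tilde\lambda$ inherits the uniform bound $C$ from the hypothesis; hence $t\mapsto\int_0^t\tilde\lambda_s\,ds$ is well-defined, continuous, and $\mathcal{G}$-predictable.

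For the right-hand side, Fubini (everything is bounded) gives $\E[\int_s^t\tilde\lambda_u\,du\cdot \mathbf{1}_A] = \int_s^t\E[\tilde\lambda_u \mathbf{1}_A]\,du$. The uniform bound $|\lambda^h_u|\leq C$ lets me invoke dominated convergence inside the expectation to conclude $\E[\tilde\lambda_u\mathbf{1}_A] = \lim_{h\downarrow 0}\E[\lambda^h_u\mathbf{1}_A]$ for each $u$, and a second application of dominated convergence (the integrand being bounded by $C\,\Prob(A)$) pulls the limit through the integral in $u$:
$$\int_s^t\E[\tilde\lambda_u\mathbf{1}_A]\,du \;=\; \lim_{h\downarrow 0}\int_s^t\E[\lambda^h_u\mathbf{1}_A]\,du.$$

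Because $A\in\mathcal{G}_s\subseteq\mathcal{G}_u$ for $u\geq s$, the definition $\lambda^h_u = h^{-1}\E[N_{u+h}-N_u\mid\mathcal{G}_u]$ gives $\E[\lambda^h_u\mathbf{1}_A] = h^{-1}\E[(N_{u+h}-N_u)\mathbf{1}_A]$. Exchanging expectation and integral and performing a translation,
$$\int_s^t\tfrac{1}{h}(N_{u+h}-N_u)\,du \;=\; \tfrac{1}{h}\int_t^{t+h}\! N_u\,du \;-\; \tfrac{1}{h}\int_s^{s+h}\! N_u\,du.$$
Since $N$ is right-continuous and bounded by $1$, each running average converges a.s.\ to $N_t$ and $N_s$ respectively, while the whole expression is bounded by $1$ in absolute value; bounded convergence then yields
$$\lim_{h\downarrow 0}\E\!\left[\tfrac{1}{h}\int_s^t(N_{u+h}-N_u)\,du\cdot \mathbf{1}_A\right] \;=\; \E[(N_t-N_s)\mathbf{1}_A].$$
Chaining the equalities produces the desired identity, so $M_t$ is a $\mathcal{G}$-martingale; uniqueness in Doob--Meyer then identifies $\int_0^t\tilde\lambda_s\,ds$ with the $\mathcal{G}$-compensator of $N$ (the localization to $\{\tau>t\}$ is automatic because $\tilde\lambda$ vanishes after default via the indicator).

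The main obstacle is the double dominated-convergence step: it is the uniform boundedness of $\lambda^h$---not merely pointwise convergence---that powers both the interchange of limit with expectation and with the $du$-integral. A naive Riemann-sum approach (partitioning $[s,t]$, writing $\E[N_t-N_s\mid\mathcal{G}_s] = \sum_k h\,\E[\lambda^h_{t_k}\mid\mathcal{G}_s]$) runs into trouble because pointwise convergence of $\lambda^h_u\to \tilde\lambda_u$ at each fixed $u$ does not imply convergence of the mesh sums $\sum_k h\lambda^h_{t_k}$; the Fubini/change-of-variables route above sidesteps this difficulty cleanly.
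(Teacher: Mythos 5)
Your argument is correct, but it is genuinely different from what the paper does: the paper gives no proof of this lemma at all, citing Aven (1985), whose theorem is the more general statement in which uniform boundedness is replaced by domination of $|\lambda^h_t-\td{\lambda}_t|$ by a measurable process $d$ with $\int_0^t d(s)\,ds<\infty$. Your direct verification of the martingale identity via the two dominated-convergence steps and the change of variables $\int_s^t(N_{u+h}-N_u)\,du=\int_t^{t+h}N_u\,du-\int_s^{s+h}N_u\,du$, followed by right-continuity of $N$ and uniqueness in Doob--Meyer (the limit process being continuous, increasing and adapted, hence predictable), is a clean self-contained proof of exactly the special case the paper needs; it is essentially the mechanism behind Aven's theorem and Meyer's Laplacian approximation, specialized to a bounded family $(\lambda^h)$. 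What the citation route buys is generality (an integrable dominating process rather than a uniform bound, which matters when intensities are unbounded); what your route buys is elementarity — only Fubini, dominated convergence and Doob--Meyer uniqueness. Two small points worth tightening: (i) the Fubini step on the right-hand side and the very definition of $\int_0^t\td{\lambda}_s\,ds$ require a jointly (progressively) measurable version of $\td{\lambda}$, which a pointwise-in-$t$ a.s.\ limit of conditional expectations does not by itself provide — in the paper's application this is harmless because $\td{\lambda}$ has the explicit formula $\mathbf{1}_{\{\tau>t\}}\bigl(-c\mathbf{1}_{\{X_t-\dl{X}_t=0\}}\mathbf{1}_{\{c<0\}}+\Pi(X_t-\dl{X}_t)\bigr)$; (ii) for the $u$-integrals, note that $u\mapsto\E[\lambda^h_u\mathbf{1}_A]=h^{-1}\bigl(\E[N_{u+h}\mathbf{1}_A]-\E[N_u\mathbf{1}_A]\bigr)$ is measurable since $u\mapsto\E[N_u\mathbf{1}_A]$ is nondecreasing, so the second dominated-convergence application is legitimate. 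Your closing remark is also right that the restriction to $\{\tau>t\}$ is automatic, since $\lambda^h_t=0$ on $\{\tau\le t\}$ and hence so is $\td{\lambda}_t$.
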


With the help of the results of previous  subsections, we can now present the proof of the main theorem.

\noindent{\it Proof of Theorem \ref{thm: main}}. 
 Recall (\ref{k^h}) that on $\{\tau>t\}$
$$
\lambda^h_t = e^{-\dl{X}_t} k^h_t
= \frac{1}{h}  \mathbb{E}\left[1-e^{-\left(y -\dl{X}_{h} \right)^+} \right]\Big|_{y=\dl{X}_{t}-X_t}
$$
and $y\leq 0$, $\dl{X}_{t}\leq 0$ and $c\geq c\wedge 0$, which implies
$\left(y -\dl{X}_{h} \right)^+\leq  -\dl{X}_{h}$  and $\dl{X}_{h}\geq (c\wedge
0)h-S_h$, we have
\begin{eqnarray*}
\lambda^h_t & \leq& \frac{1}{h} \mathbb{E}\left[1-e^{\underline{X}_{h}}\right]
\\
&\leq& \frac{1}{h} \mathbb{E}\left[1-e^{(c\wedge 0)h-S_{h}}\right] \\
&=& \frac{1}{h} \left(1-e^{(c\wedge 0)h-\Pi(0)h} \right)\\
& \leq& -(c\wedge 0)+\Pi(0).
\end{eqnarray*}
Hence the sequence $(\lambda^h_t)_{h>0}$ is uniformly bounded in $t$ and
$h$ a.s., 
Lemma \ref{Aven's Condition} gives the required conclusion that $\lambda$
and $\tilde \lambda$ are indistinguishable on $\{\tau>t\}$, which leads to the expression of $\lambda$ from  Proposition \ref{prop:Squeezing_positiveDrift}.
The proof of  Theorem~\ref{thm: main} is now complete.
\hfill$\Box$

\begin{rmk}{\rm
Similarly, $k^h_t = e^{\dl{X}_t}\lambda^h_t \leq \lambda^h_t$ is also bounded
and with a similar argument as Aven's condition due to the Meyer's Laplacian
approximation theorem, we can conclude that the $\F$-compensator of $\Prob(\tau\leq
t|\Ft_t)$  is $K_t = \int_0^t \td{k}_s ds$ where $\td{k}_t = \lim_{h \rr
0}k^h_t$.
 }
\end{rmk}

\section{Conclusions}\label{sec: conclusion}

In this paper we discuss the intensity problem of a random time 
that is  the first passage time of a finite variation L\'evy process
on a random barrier. We prove the existence of the intensity process and find its explicit representation.  
We  compute the instantaneous credit spread  process explicitly  and give a numerical example for a variance gamma process to illustrate the relation between the credit spread  and 
the distance of the asset value to its running minimal value. 
We thus reconcile the structural model with incomplete information and the path-dependent intensity model  in this setup.

\comment{
\bibliographystyle{plain}       
\bibliography{bib_ACC_Levy}             
}

\bigskip
\noindent{\bf Acknowledgements}. The authors thank the anonymous reviewer and the AE for their comments and suggestions that have helped to improve the previous version. Xin Dong also   thanks Benoit Pham-Dang for useful
discussions.

\end{document}